%% file: main.tex
\documentclass[preprint,11pt,authoryear]{elsarticle}

\usepackage[a4paper,margin=2.5cm]{geometry}
\usepackage{setspace}
\usepackage{amsmath,amssymb,amsfonts}
\usepackage[ruled]{algorithm2e}
\usepackage{amsthm}
\usepackage{graphicx}
\usepackage{makecell}
\usepackage{subcaption}
\usepackage{booktabs}
\usepackage{placeins}
\usepackage{url}

\newtheorem{definition}{Definition}
\newtheorem{proposition}{Proposition}
\newtheorem{theorem}{Theorem}
\newtheorem{lemma}{Lemma}

\biboptions{authoryear}
\begin{document}

\begin{frontmatter}

\title{Where to Split and When to Charge: Optimal Route Construction from Customer Permutations in Electric Vehicle Routing}

\author[inst1]{Leon Stjepan Uroi\'c\corref{cor1}}
\ead{leon-stjepan.uroic@fer.hr}

\author[inst1]{Marko {\DJ}urasevi\'c}
\ead{marko.durasevic@fer.hr}

\affiliation[inst1]{
  organization={University of Zagreb Faculty of Electrical Engineering and Computing},
  addressline={Unska 3},
  city={Zagreb},
  postcode={10000},
  country={Croatia}
}

\begin{abstract}
Permutation-based metaheuristics are widely used for electric vehicle routing, where candidate solutions are represented as ordered sequences of customers. Such sequences, however, do not directly define feasible vehicle routes: they must be decoded by choosing where to split the permutation into routes and where to insert charging-station visits, subject to cargo capacity and battery constraints. These decisions are inherently interdependent, since each return to the depot both separates consecutive routes and restores the vehicle battery. This paper formalizes the task as the Fixed-Permutation Splitting and Charging Problem and proposes an exact forward labeling algorithm that constructs a minimum-distance feasible decoding of a fixed customer permutation using dynamic programming with dominance pruning. We further derive restricted variants representing increasingly simplified decoding strategies: first separating route splitting from charging-station insertion, and then additionally limiting each inter-customer segment to at most one charging-station visit. Computational experiments on benchmark and randomly generated instances, including comparisons with heuristic decoders from the literature, confirm that the exact decoder remains tractable in practice and reveal a clear hierarchy among decoding strategies. The most restrictive variant achieves runtimes close to those of heuristic decoders while delivering substantially higher decoding success rates and better solution quality. Less restrictive variants further improve quality and robustness at the cost of additional runtime. The exact joint decoder provides the optimal reference for each fixed permutation, clarifying the trade-offs introduced by common decoding simplifications.
\end{abstract}

\begin{keyword}
Routing \sep
electric vehicle routing \sep
dynamic programming \sep
permutation decoding \sep
metaheuristics
\end{keyword}

\end{frontmatter}

\input{introduction}
\input{related_work}
\input{problem_statement}
\input{method}
\input{results}
\input{conclusion}

\section{Acknowledgments}
The work of doctoral student Leon Stjepan Uroić has been fully supported by the "Young researchers' career development project – training of doctoral students" of the Croatian Science Foundation DOK-2025-02-8664. The authors would like to thank Ivan Milinović, with whom this line of research was originally initiated, for the early collaborative work that laid the groundwork for this paper. Leon Stjepan Uroić would also like to thank Ana Sović Kržić for her continued guidance and support, and Liljana Puškar for carefully reading the manuscript and providing valuable feedback.

\section*{Declaration of generative AI and AI-assisted technologies in the manuscript preparation process}
During the preparation of this work the authors used Claude and ChatGPT in order to improve the language, readability, and flow of the manuscript. After using this tool, the authors reviewed and edited the content as needed and take full responsibility for the content of the published article.

\bibliography{references}

\input{appendix}

\end{document}

%% file: introduction.tex
\section{Introduction}
Governments worldwide are tightening emission standards and introducing low-emission zones in urban centers \citep{jonidi2021urban}. Meanwhile, e-commerce is growing rapidly, and with it the demand for last-mile delivery in cities \citep{bhatti2020commerce}. As a consequence, an increasing number of delivery companies are incorporating electric vehicles into their fleets \citep{juan2016electric}. While electric vehicles eliminate tailpipe emissions and reduce fuel costs, they also introduce new planning challenges: battery capacity is limited, recharging takes significantly longer than refueling, and charging infrastructure is unevenly distributed and less developed than refueling infrastructure \citep{oliveira2017sustainable, juan2016electric}. These constraints fundamentally change the nature of the underlying routing problem.

Formally, the task of constructing delivery routes is known as the \textit{Vehicle Routing Problem} (\textbf{VRP}). The VRP and many of its variants have been extensively studied over the past several decades \citep{braekers2016vehicle, eksioglu2009vehicle}. However, the methods and algorithms developed for the VRP are designed for vehicles with internal combustion engines, where refueling is nearly instantaneous, gas stations are ubiquitous, and driving range is not a major concern. None of these assumptions hold for electric vehicles, and as a result a new branch of routing problems, known as the \textit{Electric Vehicle Routing Problem} (\textbf{EVRP}), has received growing attention \citep{kucukoglu2021electric, erdelic2019survey}. Beyond the usual capacity and routing constraints, solving the EVRP also requires deciding where and when to recharge, which charging stations to visit, and in what sequence, while respecting battery limits that couple decisions across the entire route. This combinatorial explosion in the decision space is what distinguishes the EVRP from the classical VRP and renders existing methods insufficient without adaptation.

While a variety of approaches have been explored for the EVRP, metaheuristics are among the most widely used and have reported strong performance~\citep{kucukoglu2021electric, erdelic2019survey}. To cope with the aforementioned complexity, metaheuristic approaches often decompose the problem into multiple levels. A common strategy is to first partition customers into routes and then invoke a separate procedure to insert charging station visits in order to satisfy the battery constraint \citep{feng_bilevel_2024, jia2021bilevel, jia2022confidence}. This decomposition has become common enough that a dedicated line of research, known as the \textit{Fixed Route Vehicle Charging Problem} (\textbf{FRVCP}), has emerged around this charging-insertion subproblem \citep{montoya2016electric}.

However, many metaheuristic algorithms commonly applied to the EVRP naturally operate on permutation-based representations. For instance, the use of permutations as solution representations has been extensively studied for genetic algorithms, both in the general context and in the context of VRP variants, and a large library of crossover and mutation operators exists in the literature \citep{pavai2016survey, larranaga1999genetic, vidal2022hybrid}. The same holds for other metaheuristics \citep{9504893, tian1999application}. Consequently, there is a need for algorithms that can produce a valid EVRP solution from a customer permutation; we refer to this problem as the \textit{Fixed-Permutation Splitting and Charging Problem} (\textbf{FPSCP}). The task of such an algorithm is to strategically insert depot and charging station visits into a customer permutation so that the resulting solution has minimum total distance while satisfying all constraints. 

Although the FRVCP has received considerable attention \citep{montoya2016multi, roberti2016electric, hiermann2016electric, schiffer2018adaptive, deschenes2020fixed, kullman2021frvcpy, froger2019improved}, the broader FPSCP, which additionally requires splitting customers into routes, has received limited attention and remains largely unexplored. Yet this problem arises naturally in permutation-based metaheuristics, where a customer ordering must be converted into a complete feasible EVRP solution before it can be evaluated. An optimal FPSCP solver also makes it possible to quantify the effect of common heuristic simplifications, such as separating route splitting from charging insertion or limiting the number of charging-station visits between consecutive customers.

To address these gaps, \textbf{this paper makes the following contributions}:
\begin{itemize}
\item formulating the Fixed-Permutation Splitting and Charging Problem, which arises in permutation-based EVRP methods and jointly captures route splitting and charging-station insertion from a fixed customer ordering,
\item proposing an optimal forward-labeling algorithm for FPSCP and proving that it returns an optimal decoding for any fixed feasible customer permutation,
\item deriving restricted variants of the proposed framework: a split-then-charge decoder and a further restriction limiting charging-station insertion to at most one station per inter-customer segment, enabling direct comparison across decoding strategies,
\item quantifying the impact of the aforementioned decoding simplifications through computational experiments comparing the proposed variants against heuristic decoders in terms of solution quality, solve rate, and runtime.
\end{itemize}

The remainder of this paper is organized as follows. Section~\ref{sec:related-work} reviews related work on electric vehicle routing, fixed-route charging problems, and decoding procedures used in permutation-based metaheuristics. Section~\ref{sec:problem-statement} formally defines the Fixed-Permutation Splitting and Charging Problem. Section~\ref{sec:algorithm} presents the proposed forward labeling algorithm and proves its optimality. Section~\ref{sec:results} reports computational experiments on benchmark and randomly generated instances, including comparisons with restricted and heuristic decoders. Finally, Section~\ref{sec:conclusion} concludes the paper and discusses directions for future work.

%% file: related_work.tex
\section{Related Work}
\label{sec:related-work}

The \textit{Vehicle Routing Problem}, first introduced by \citet{dantzig1959truck}, is one of the most extensively studied combinatorial optimization problems~\citep{braekers2016vehicle, eksioglu2009vehicle}. As a generalization of the \textit{Traveling Salesman Problem} (\textbf{TSP}), it is NP-hard~\citep{lenstra1981complexity}, and decades of research have produced a rich landscape of solution methods ranging from exact algorithms and mathematical programming to construction heuristics and population-based metaheuristics~\citep{braekers2016vehicle}. The operations research community has further extended the problem into many practically relevant variants that accommodate real-world constraints such as time windows, heterogeneous fleets, and stochastic conditions~\citep{konstantakopoulos2022vehicle}.

With the introduction of electric vehicles, a new routing problem called the \textit{Electric Vehicle Routing Problem} emerged~\citep{kucukoglu2021electric, erdelic2019survey}. Since electric vehicles have a shorter range than conventional vehicles, recharging takes significantly longer than refueling, and charging infrastructure is less developed, recharging decisions cannot be ignored during route planning. As a result, the EVRP is considerably harder than the classical VRP, since a solver must additionally decide when to recharge, which charging stations to visit, and how to ensure that the vehicle never runs out of battery while en route.

A range of approaches have been proposed for the EVRP. On the exact side, \citet{tahami2020exact} introduced a branch-and-cut method able to solve instances of moderate size to optimality. For dynamic variants, in which instance parameters may change while vehicles are en route, hyper-heuristic methods have been explored as a way to automatically adapt search strategies to shifting conditions~\citep{durasevic2024automated}. More recently, deep learning has emerged as an active direction, with several works framing the EVRP as a sequential decision making problem and training neural policies to construct routes~\citep{lin2021deep, basso2022dynamic, tang2023energy}. Despite this growing diversity, metaheuristics remain the dominant paradigm in the literature and continue to report the strongest results on the WCCI-2020 benchmark~\citep{mavrovouniotis2020benchmark}.

Many successful metaheuristic approaches rely on a bilevel decomposition of the EVRP. In these methods, an upper level determines the customer service order or route partitioning, while a lower level inserts charging-station visits to restore energy feasibility. For example, \citet{jia2021bilevel} formulate the problem in this way and combine an order-first split-second ant-colony procedure with a heuristic FRVCP solver. In a later paper, they improve this framework through a confidence-based selection mechanism and a simplified charging-station enumeration procedure~\citep{jia2022confidence}. \citet{hien2023greedy} also adopt a bilevel structure, combining route construction with a greedy charging-insertion strategy. \citet{feng_bilevel_2024} use the same decomposition, but replace the upper-level solver with Hybrid Genetic Search~\citep{vidal2022hybrid} and introduce a more advanced lower-level charging heuristic. Other approaches, such as that proposed by \citet{woller2020grasp}, similarly rely on a repair phase that first restores route feasibility and then inserts charging stations, while \citet{rodriguez2024new} instead treat routing and charging decisions within a unified hyper-heuristic framework. Overall, these works show that charging insertion is a central component of high-performing EVRP solvers, whether handled explicitly as a lower-level subproblem or embedded within a more general search procedure.

The bilevel decomposition is so prevalent that the lower-level \textit{Fixed Route Vehicle Charging Problem} has become a research topic in its own right. The FRVCP assumes that the customer visit sequence is fixed, including both the order of customers and their assignment to routes, and focuses on inserting charging-station visits so as to ensure energy feasibility while minimizing travel cost. Since the FRVCP is NP-hard~\citep{montoya2016electric}, many EVRP solvers rely on heuristic FRVCP procedures, including several of the methods discussed above. At the same time, a number of exact approaches have been proposed by exploiting the close connection between the FRVCP and the \textit{Constrained Shortest Path} (\textbf{CSP}) problem. \citet{montoya2016multi} formulate the problem on an auxiliary graph and solve it optimally using the pulse algorithm~\citep{lozano2013exact}. \citet{roberti2016electric} consider both full and partial recharging with time windows and propose a forward labeling algorithm based on precomputed recharging paths between consecutive customers. \citet{hiermann2016electric} embed FRVCP-related procedures both in a bidirectional labeling algorithm for pricing and in a simplified forward labeling routine within their \textit{Adaptive Large Neighbourhood Search} (\textbf{ALNS}) heuristic, where at most one charging station is allowed between consecutive customers. \citet{schiffer2018adaptive} use a dynamic programming approach with labels that maintain feasible arrival-time corridors under partial recharging, and also derive a constant-time penalty evaluation scheme for neighborhood search. \citet{deschenes2020fixed} extend the FRVCP to jointly optimize charging decisions and driving speeds under nonlinear charging and consumption functions. \citet{froger2019improved} propose improved formulations and algorithmic components for this broader setting, and \citet{kullman2021frvcpy} provide an open-source FRVCP solver that facilitates benchmarking and reproducibility.

Despite this growing body of work on the FRVCP, most existing studies assume that routes are already given and therefore do not address the broader problem that arises in permutation-based metaheuristics. In that setting, transforming a customer permutation into a complete feasible EVRP solution requires simultaneously deciding where to split the permutation into routes and where to insert charging-station visits. These two decisions are inherently coupled, since the depot both partitions routes and charges the vehicle. The resulting \textit{Fixed-Permutation Splitting and Charging Problem} is therefore not merely a straightforward extension of the FRVCP, but a distinct decoding problem that has received comparatively little attention in the literature. An optimal FPSCP solver is also valuable from an analytical perspective, since it makes it possible to quantify the effect of common simplifications, such as separating route splitting from charging insertion or restricting the decoder to at most one charging station between consecutive customers.

This paper addresses this gap by proposing an optimal algorithm for the FPSCP that integrates route construction and charging decisions within a unified forward labeling framework. The proposed method can be viewed as an extension of the work of \citet{roberti2016electric} from the FRVCP to the FPSCP setting. A further important difference is that their method assumes that the energy required to travel between any pair of charging stations is smaller than the vehicle's battery capacity, whereas we make no such simplifying assumption. The resulting algorithm efficiently generates optimal solutions for fixed customer permutations while remaining computationally tractable in practice.

%% file: problem_statement.tex
\section{Problem Statement}
\label{sec:problem-statement}

The \textit{Fixed-Permutation Splitting and Charging Problem} arises as a subproblem in permutation-based metaheuristics for the \textit {Electric Vehicle Routing Problem}. Given a fixed ordering of customers, the FPSCP seeks to transform this permutation into a complete, feasible EVRP solution of minimum total distance by simultaneously deciding (i) where to split the permutation into individual routes, each starting and ending at the depot, and (ii) where to insert charging station visits along the resulting routes. A feasible solution must satisfy the following constraints:
\begin{itemize}
    \item The relative order of customers in the input permutation must be preserved.
    \item Every route must start and end at the depot.
    \item Each customer must be visited exactly once.
    \item The total demand of customers on a given route cannot exceed the vehicle's maximum cargo capacity.
    \item Vehicles must maintain sufficient battery energy throughout the route.
\end{itemize}

The problem is defined on a weighted, fully connected graph \(G=(V,E)\). The vertex set \( V=\{\delta\} \cup V_c \cup V_f \) includes a single depot node \(\delta\), a set of customer nodes \(V_c=\{c_1, c_2, \ldots, c_n\}\), and a set of charging stations \( V_f \). Each edge \((i,j) \in E\) is weighted by the Euclidean distance \( d_{ij} \), and traversing it consumes \( h \cdot d_{ij} \) units of battery energy, where \(h \in \mathbb{R}^+\) denotes the battery consumption rate. Each customer \(i \in V_c\) has a demand \(q_i\). All vehicles are identical, sharing the same cargo capacity \(Q\) and battery capacity \(B\), and are assumed to depart from the depot fully charged and fully loaded.

The input to the FPSCP is a permutation \( \pi = (\pi_1, \pi_2, \ldots, \pi_n) \) of the customer set \(V_c\), which fixes the relative order in which customers must be served. A solution is a sequence of nodes
\[
    S = (s_0, s_1, \ldots, s_L), \quad s_0 = s_L = \delta, \quad s_\ell \in V \text{ for } \ell = 0, \ldots, L
\]
obtained by inserting depot visits and charging station visits into \(\pi\). Removing all non-customer nodes from \(S\) must yield exactly \(\pi\); this is the order-preservation requirement that distinguishes the FPSCP from the general EVRP. The depot visits partition \(S\) into a set of routes \(\mathcal{R}(S) = \{R_1, R_2, \ldots, R_K\}\), where each route \(R_k\) is a maximal subsequence of \(S\) starting and ending at \(\delta\) and containing no intermediate depot visit.

For a route \(R = (r_0, r_1, \ldots, r_T)\) with \(r_0 = r_T = \delta\), the total customer demand served on that route is
\[
    q(R) = \sum_{\substack{t=0,\ldots,T \\ r_t \in V_c}} q_{r_t}.
\]
The battery state \(b_t\) upon arrival at node \(r_t\) denotes the amount of battery energy consumed since the most recent visit to either the depot or a charging station. It is initialized as
\[
    b_0 = 0,
\]
and evolves according to
\[
    b_{t+1} =
    \begin{cases}
        h d_{r_t r_{t+1}}, 
        & \text{if } r_t \in V_f \cup \{\delta\}, \\[2mm]
        b_t + h d_{r_t r_{t+1}}, 
        & \text{if } r_t \in V_c,
    \end{cases}
    \qquad t = 0,\ldots,T-1.
\]
Thus, leaving a charging station or the depot resets the accumulated battery consumption to zero before the next trip, whereas leaving a customer does not. A solution \(S\) is feasible if every route \(R \in \mathcal{R}(S)\) satisfies
\begin{align}
    q(R) &\leq Q \qquad \forall R \in \mathcal{R}(S), \label{eq:capacity} \\
    b_t &\leq B \qquad \text{for } t = 0,\ldots,T. \label{eq:battery}
\end{align}
Condition~\eqref{eq:capacity} enforces the cargo capacity limit, while condition~\eqref{eq:battery} ensures that the vehicle never runs out of battery along the route. The objective of the FPSCP is to find a feasible solution that minimizes the total travel distance
\begin{equation}
    min \hspace{1em} f(S) = \sum_{\ell=0}^{L-1} d_{s_\ell s_{\ell+1}}. \label{eq:objective}
\end{equation}
subject to the aforementioned constraints.

%% file: method.tex
\section{Optimal FPSCP Forward Labeling Algorithm}
\label{sec:algorithm}

In this section, we present an optimal forward labeling algorithm for the FPSCP, which we denote by \textbf{FP-FLA} (\textit{Fixed-Permutation Forward Labeling Algorithm}). The algorithm processes the customer permutation from left to right and maintains, after each customer, a Pareto front of non-dominated partial solutions. Each partial solution is extended by considering three types of transitions: direct travel to the next customer, detours through charging stations, and detours that include a depot visit. The complete procedure is given in Algorithm~\ref{alg:FP-FLA}, while the remainder of the section explains the label structure, the extension rules, and the proof of optimality.

\begin{algorithm}[htbp]
\caption{FP-FLA: Optimal FPSCP Forward Labeling Algorithm}
\label{alg:FP-FLA}
\SetAlgoLined
\KwIn{A permutation of customer nodes \( \pi = (\pi_1, \pi_2, \ldots, \pi_n) \), cargo capacity $Q$, battery capacity $B$}
\KwOut{A minimum-distance feasible decoding of \( \pi \), if one exists}

Initialize $front_0$ with the single label $(d^0 = 0, q^0 = 0, b^0 = 0)$\;

Compute the distance matrix $\mathbf{F}$ for all charging stations, including the depot\;

$\pi' \leftarrow (\delta, \pi_1, \dots, \pi_n, \delta)$\;

\For{$i \leftarrow 1$ \KwTo $|\pi'|-1$}{
    Initialize $front_i \leftarrow \emptyset$\;
    \ForEach{$L^{i-1}_j \in front_{i-1}$}{
    
        \tcp{Option 1: Direct extension}
        $L^{i}_{\text{direct}} \gets$ extend $L^{i-1}_j$ directly to $\pi'_i$ using equation~\eqref{eq:direct}\;
        \If{$q^i_{\text{direct}} \le Q \land b^i_{\text{direct}} \le B$}{
            Add $L^{i}_{\text{direct}}$ to $front_i$\;
        }

        \ForEach{$f_{\text{in}} \in V_f \cup \{\delta\}$}{
            \ForEach{$f_{\text{out}} \in V_f \cup \{\delta\}$}{
                \tcp{Option 2: Charging-station detour}
                $L^{i}_{\text{charge}} \gets$ extend $L^{i-1}_j$ via a charging detour from $f_{\text{in}}$ to $f_{\text{out}}$ using equation~\eqref{eq:charge}\;
                \If{$b^{i-1} + h d_{\pi_{i-1}, f_in} \le B \land q^i_{\text{charge}} \le Q \land b^i_{\text{charge}} \le B$}{
                    Add $L^{i}_{\text{charge}}$ to $front_i$\;
                }

                \tcp{Option 3: Depot detour}
                $L^{i}_{\text{depot}} \gets$ extend $L^{i-1}_j$ via a depot detour from $f_{\text{in}}$ to $f_{\text{out}}$ using equation~\eqref{eq:depot}\;
                \If{$b^{i-1} + h d_{\pi_{i-1}, f_in} \le B \land q^i_{\text{depot}} \le Q \land b^i_{\text{depot}} \le B$}{
                    Add $L^{i}_{\text{depot}}$ to $front_i$\;
                }
            }
        }
    }

    Remove dominated labels from $front_i$\;
}
\If{$front_{|\pi'|-1} = \emptyset$}{
    \Return{No feasible solution}\;
}
\Return{the minimum-distance label in $front_{|\pi'|-1}$ and its reconstructed solution}\;
\end{algorithm}

The algorithm takes a permutation \( \pi = (\pi_1, \pi_2, \ldots, \pi_n) \) of customer nodes \(V_c\) as input. A partial solution up to \(\pi_i\) is represented by the label
\[
L^i = (d^i, q^i, b^i).
\]
Here, \(d^i\) denotes the total distance traveled up to the \(i\)-th customer, including all depot and charging-station visits. The term \(q^i\) represents the cargo used since the most recent depot visit, while \(b^i\) denotes the battery consumed since the last visit to either a charging station or the depot. Additionally, each label must store backtracking information in order to reconstruct the final solution. This information is omitted here for clarity. Based on the triple \((d^i, q^i, b^i)\), a natural dominance relation arises between labels.

\begin{definition}[Pareto Domination]
Let
\[
L^i_j = (d^i_j, q^i_j, b^i_j) \quad \text{and} \quad L^i_k = (d^i_k, q^i_k, b^i_k)
\]
be two distinct labels corresponding to partial solutions up to \(\pi_i\). We say that \(L^i_j\) is \emph{dominated} by \(L^i_k\), denoted \(L^i_j \prec L^i_k\), if and only if
\[
d^i_k \le d^i_j, \quad
q^i_k \le q^i_j, \quad
b^i_k \le b^i_j,
\]
and at least one of these inequalities is strict.
\end{definition}

The proposed method is a forward labeling algorithm that processes the customer permutation from the start to the end. At step \(i\), it starts from the set of non-dominated partial solutions up to \(\pi_{i-1}\) and extends them to \(\pi_i\). As illustrated in Figure~\ref{fig:jump-cases}, each label can be extended in three different ways.

\begin{figure}[htbp]
    \centering
    \includegraphics[width=0.70\textwidth]{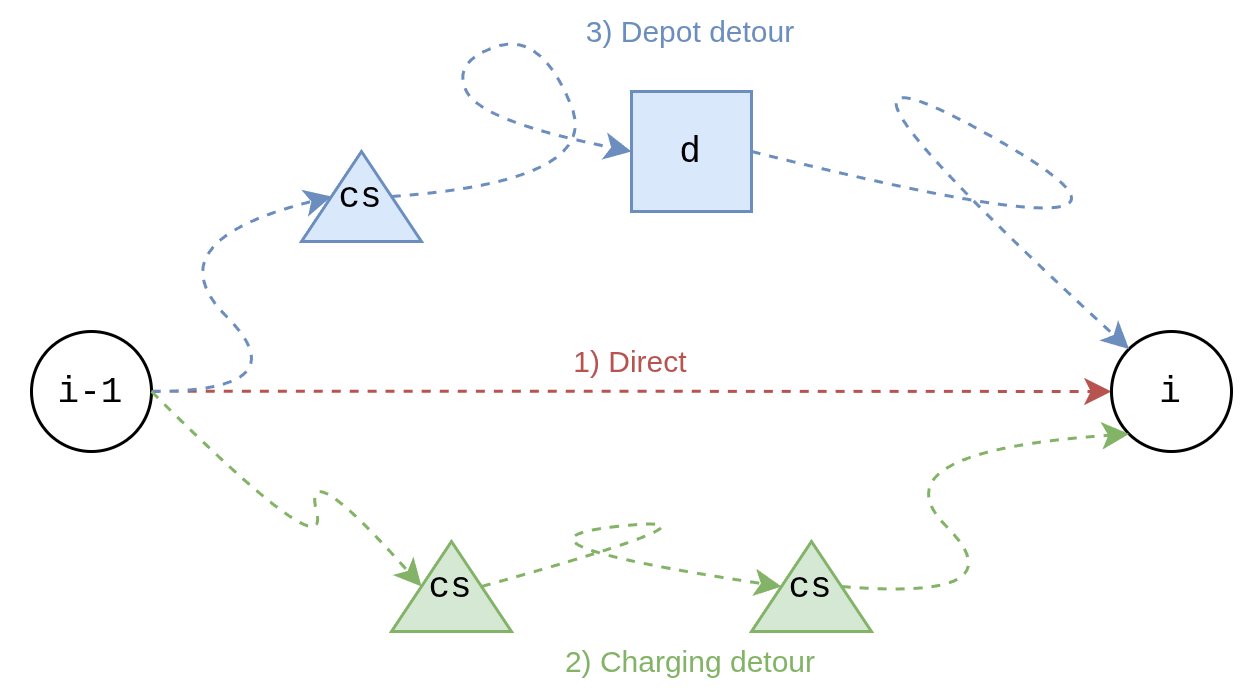}
    \caption{The three possible extensions of a partial solution from the $(i-1)$-th customer to the $i$-th customer.}
    \label{fig:jump-cases}
\end{figure}

The first option is to travel directly from customer \(\pi_{i-1}\) to customer \(\pi_i\). The resulting label \(L^i_{\text{direct}}\) is defined as
\begin{equation}
\begin{aligned}
    d^i_{\text{direct}} &= d^{i-1} + d_{\pi_{i-1}, \pi_i}, \\
    q^i_{\text{direct}} &= q^{i-1} + q_{\pi_i}, \\
    b^i_{\text{direct}} &= b^{i-1} + h d_{\pi_{i-1}, \pi_i}.
\end{aligned}
\label{eq:direct}
\end{equation}

The second option is to visit one or more charging stations before proceeding to the next customer. For this purpose, we temporarily treat the depot as a charging station, while ignoring the fact that a depot visit also resets the vehicle's cargo state. Such extensions are handled separately later through depot detours.

A charging detour is characterized by the first charging station visited after \(\pi_{i-1}\), denoted by \(f_{\text{in}}\), and the last charging station visited before \(\pi_i\), denoted by \(f_{\text{out}}\). The choice of \(f_{\text{in}}\) is constrained by the remaining battery at \(\pi_{i-1}\), since not every charging station is reachable from the current partial solution. Similarly, not every customer \(\pi_i\) is reachable from every charging station \(f_{\text{out}}\), even with a fully charged battery. Once \(f_{\text{in}}\) and \(f_{\text{out}}\) are fixed, however, the intermediate charging stations depend only on this pair and not on the surrounding route. We now formalize this observation.

\begin{definition}[Optimal Sequence of Intermediate Charging Stations]
Let \(f_{\text{in}}, f_1, f_2, \ldots, f_K, f_{\text{out}}\) be a sequence of charging stations visited between customer nodes \(\pi_{i-1}\) and \(\pi_i\). The sequence of intermediate charging stations \(f_1, f_2, \ldots, f_K\) is said to be \emph{optimal} for a given pair \((f_{\text{in}}, f_{\text{out}})\) if, for every partial solution with label \(L^{i-1}\), the resulting label \(L^i\) dominates every label obtained by any alternative choice of intermediate charging stations between \(f_{\text{in}}\) and \(f_{\text{out}}\).
\label{def:optimal-intermediate-cs}
\end{definition}

We can now state and prove this property formally.

\begin{proposition}
Let \(f_{\text{in}}, f_1, f_2, \ldots, f_K, f_{\text{out}}\) be a sequence of charging stations visited between customer nodes \(\pi_{i-1}\) and \(\pi_i\). The optimal sequence of intermediate charging stations \(f_1, f_2, \ldots, f_K\) depends only on the first charging station \(f_{\text{in}}\) and the last charging station \(f_{\text{out}}\), and is independent of the remainder of the route.
\label{prop:cs}
\end{proposition}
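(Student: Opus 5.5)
The plan is to make explicit how a charging detour with fixed endpoints \(f_{\text{in}}\) and \(f_{\text{out}}\) acts on a label. Fix a label \(L^{i-1}=(d^{i-1},q^{i-1},b^{i-1})\) and a sequence \(f_{\text{in}},f_1,\ldots,f_K,f_{\text{out}}\) between \(\pi_{i-1}\) and \(\pi_i\). Applying the battery recursion of Section~\ref{sec:problem-statement} step by step gives the following. The arrival battery at \(f_{\text{in}}\) is \(b^{i-1}+h d_{\pi_{i-1} f_{\text{in}}}\). Every subsequent leg starts from a station, so its consumption is reset. The resulting label is
\[
d^i = d^{i-1} + d_{\pi_{i-1} f_{\text{in}}} + C(f_{\text{in}},f_1,\ldots,f_K,f_{\text{out}}) + d_{f_{\text{out}}\pi_i},\quad q^i = q^{i-1}+q_{\pi_i},\quad b^i = h d_{f_{\text{out}}\pi_i},
\]
where \(C\) is the total length of the station-to-station path \(f_{\text{in}}\to f_1\to\cdots\to f_K\to f_{\text{out}}\). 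The feasibility conditions split into three groups:
\begin{itemize}
\item the first leg, \(b^{i-1}+hd_{\pi_{i-1}f_{\text{in}}}\le B\), which depends only on \(L^{i-1}\) and \(f_{\text{in}}\);
\item the last leg, \(hd_{f_{\text{out}}\pi_i}\le B\), which depends only on \(f_{\text{out}}\);
\item the intermediate legs, \(hd_{f_k f_{k+1}}\le B\) for each consecutive pair of stations, which depend only on the intermediate sequence.
\end{itemize}
The capacity condition is untouched, since the depot's cargo reset is excluded here and handled by depot detours.

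The key observation I will draw from this decomposition is that the choice of intermediate stations enters the new label \emph{only} through the additive term \(C\) in \(d^i\). It enters feasibility only through the intermediate-leg constraints, which never involve \(L^{i-1}\), \(\pi_{i-1}\) or \(\pi_i\). The components \(q^i\) and \(b^i\) are identical for every admissible intermediate sequence with the same endpoints. Comparing two such sequences therefore reduces to comparing their values of \(C\). Now define the auxiliary graph on \(V_f\cup\{\delta\}\) whose arcs are the pairs \((f,g)\) with \(h d_{fg}\le B\), weighted by \(d_{fg}\). Take a shortest \(f_{\text{in}}\)--\(f_{\text{out}}\) path in this graph, which is the entry of the matrix \(\mathbf{F}\). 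Its interior nodes give a sequence that minimizes \(C\) among all feasible intermediate sequences. The resulting label has the smallest \(d^i\) and equal \(q^i,b^i\), so it dominates (or ties with) every alternative label. Since the minimization involved no data from \(L^{i-1}\) or from the customers, the same sequence is optimal for every partial solution, which is exactly Definition~\ref{def:optimal-intermediate-cs}.

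The main obstacle is to argue carefully that nothing else couples the intermediate choice to the surroundings. Three points need checking. First, the arrival battery at \(f_{\text{out}}\) must not influence \(b^i\); this holds because leaving \(f_{\text{out}}\) resets consumption. Second, reachability of \(f_{\text{in}}\) must be independent of the intermediate path, which holds because it involves only the first leg. Third, I must handle the degenerate cases: \(K=0\) (direct \(f_{\text{in}}\to f_{\text{out}}\) arc), \(f_{\text{in}}=f_{\text{out}}\), and pairs with no feasible path, which are simply unreachable for every label alike. Here I will also note that, unlike \citet{roberti2016electric}, we do not assume every station pair is directly reachable. This is why the shortest path must be taken in the restricted graph rather than using \(d_{f_{\text{in}}f_{\text{out}}}\) directly.

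Finally, I would remark on ties. A strictly dominating sequence is not always possible, since equal-cost alternatives yield identical labels. So "dominates" is read as "dominates or coincides with", which suffices for the pruning in Algorithm~\ref{alg:FP-FLA}.
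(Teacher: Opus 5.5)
Your proof is correct and follows essentially the same route as the paper. Both arguments observe that the intermediate stations leave \(q^i\) and \(b^i = h d_{f_{\text{out}},\pi_i}\) unchanged and enter \(d^i\) only through the additive station-to-station path length. The feasibility of that path, which starts from a full battery at \(f_{\text{in}}\), does not involve \(L^{i-1}\) or the customers. Your version adds some detail: it splits the battery constraints into three explicit groups, treats ties and degenerate cases, and folds in the shortest-path characterization that the paper states separately as Proposition~\ref{prop:shortest-path}.
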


\begin{proof}
The cargo component at customer node \(\pi_i\) does not depend on the choice of intermediate charging stations between \(\pi_{i-1}\) and \(\pi_i\), since visiting charging stations does not affect cargo. Likewise, because the vehicle is fully recharged at the last charging station \(f_{\text{out}}\), the battery component \(b^i\) upon arrival at \(\pi_i\) is also independent of the choice of intermediate charging stations. Therefore, by Definition~\ref{def:optimal-intermediate-cs}, an optimal sequence of intermediate charging stations must minimize only the total traveled distance.

For a fixed partial solution with label \(L^{i-1}\), the total distance traveled to reach customer \(\pi_i\) can be written as
\begin{equation}
    d^i ={} d^{i-1} + d_{\pi_{i-1}, f_{\text{in}}} + d_{f_{\text{in}}, f_1}
          + \sum_{k=2}^{K} d_{f_{k-1}, f_k} + d_{f_K, f_{\text{out}}}
          + d_{f_{\text{out}}, \pi_i}.
\end{equation}
Since \(d^{i-1}\), \(\pi_{i-1}\), \(\pi_i\), \(f_{\text{in}}\), and \(f_{\text{out}}\) are fixed, minimizing \(d^i\) reduces to minimizing
\[
d_{f_{\text{in}}, f_1} + \sum_{k=2}^{K} d_{f_{k-1}, f_k} + d_{f_K, f_{\text{out}}},
\]
which depends only on \(f_{\text{in}}\) and \(f_{\text{out}}\). This proves independence from the route following \(f_{\text{out}}\).

It remains to show independence from the route preceding \(f_{\text{in}}\). Since \(f_{\text{in}}\) is a charging station, every feasible sequence of intermediate charging stations starts from \(f_{\text{in}}\) with a fully charged battery. As noted above, the cargo component is unaffected by the intermediate charging sequence. Hence, the feasibility and cost of the intermediate sequence do not depend on the route preceding \(f_{\text{in}}\). Therefore, the optimal sequence depends only on \(f_{\text{in}}\) and \(f_{\text{out}}\).
\end{proof}

\begin{proposition}
Given \(f_{\text{in}}\) and \(f_{\text{out}}\), the optimal sequence of intermediate charging stations can be computed as the shortest path from \(f_{\text{in}}\) to \(f_{\text{out}}\) in the graph
\[
G = (V_f \cup \{\delta\}, E), \quad
E = \{\{u,v\} \mid u,v \in V_f \cup \{\delta\},\; h d_{u,v} \le B\},
\]
without any additional constraints.
\label{prop:shortest-path}
\end{proposition}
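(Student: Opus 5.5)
We prove the statement by setting up a bijection between feasible intermediate sequences and paths in $G$, and then invoking Proposition~\ref{prop:cs}. By Proposition~\ref{prop:cs} and its proof, once $f_{\text{in}}$ and $f_{\text{out}}$ are fixed, the cargo component $q^i$ and the battery component $b^i$ of the resulting label do not depend on the intermediate stations. The battery component equals $h d_{f_{\text{out}},\pi_i}$, because the vehicle leaves $f_{\text{out}}$ fully recharged. Consequently, an optimal intermediate sequence is exactly a \emph{feasible} sequence $f_1,\dots,f_K$ that minimizes
\[
d_{f_{\text{in}},f_1}+\sum_{k=2}^{K} d_{f_{k-1},f_k}+d_{f_K,f_{\text{out}}}.
\]
It therefore suffices to show two things: the feasible intermediate sequences correspond exactly to $f_{\text{in}}$--$f_{\text{out}}$ walks in $G$, and the cost of such a walk equals the sum of its edge weights $d_{u,v}$.

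\textbf{Characterizing feasibility.} By the battery recursion in Section~\ref{sec:problem-statement}, every node of $V_f\cup\{\delta\}$ resets the consumption to zero. On the leg $f_{\text{in}},f_1,\dots,f_K,f_{\text{out}}$, the battery value on arrival at each node is therefore exactly $h d_{u,v}$, where $u$ is the preceding station. Constraint~\eqref{eq:battery} on these arrivals thus reduces to $h d_{u,v}\le B$ for every consecutive pair, which is precisely the edge condition defining $E$. The two end legs, $\pi_{i-1}\to f_{\text{in}}$ and $f_{\text{out}}\to\pi_i$, involve only the fixed endpoints. They are checked separately by the conditions in Algorithm~\ref{alg:FP-FLA}, so they impose nothing on the intermediate stations. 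Capacity is unaffected because stations carry no demand. Temporarily treating $\delta$ as a station is permitted here, as stated before Definition~\ref{def:optimal-intermediate-cs}; depot detours are handled separately. Hence a sequence is feasible if and only if $f_{\text{in}},f_1,\dots,f_K,f_{\text{out}}$ is a walk in $G$. Since the graph is undirected and $d$ is symmetric, the orientation of the walk does not matter.

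\textbf{Identifying the optimum with a shortest path.} Weighting each edge by $d_{u,v}\ge 0$, the detour cost above is the length of the corresponding walk. Minimizing the cost over feasible sequences is therefore the same as finding a shortest $f_{\text{in}}$--$f_{\text{out}}$ walk in $G$. Because all weights are nonnegative, a shortest walk can be taken to be a simple path: deleting any cycle does not increase the length, and the shortened sequence remains a walk in $G$, so it is still feasible. This is the point that needs care, since the paper, unlike \citet{roberti2016electric}, does not assume that every pair of stations is within range. The edge restriction $h d_{u,v}\le B$ is what encodes the resource constraint. Once it is imposed, no constraint remains that couples different edges, because the battery resets at every node of $G$. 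The problem is therefore an unconstrained shortest-path problem, and no resource-constrained shortest-path machinery is needed; Dijkstra or Floyd--Warshall computes it, which yields the matrix $\mathbf{F}$.

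Two details remain to be noted. If $f_{\text{in}}=f_{\text{out}}$, the empty path with $K=0$ gives the trivial detour. If $f_{\text{out}}$ is not reachable from $f_{\text{in}}$ in $G$, no feasible intermediate sequence exists, and the corresponding distance is set to $\infty$.
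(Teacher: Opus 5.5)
Your proof is correct and follows the same route as the paper. You use Proposition~\ref{prop:cs} to reduce the problem to minimizing the inter-station distance sum, then observe that the battery resets at every node of $V_f \cup \{\delta\}$, so energy feasibility of the intermediate leg is exactly the per-edge condition $h d_{u,v} \le B$ and the problem becomes an unconstrained shortest path in $G$. You are more explicit than the paper about the converse direction (every feasible sequence is a walk in $G$), the reduction from walks to simple paths, and the $f_{\text{in}} = f_{\text{out}}$ and unreachable cases, but these only fill in details the paper's short argument leaves implicit.
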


\begin{proof}
By Proposition~\ref{prop:cs}, the optimal sequence of intermediate charging stations is precisely the feasible sequence minimizing
\[
d_{f_{\text{in}}, f_1} + \sum_{k=2}^{K} d_{f_{k-1}, f_k} + d_{f_K, f_{\text{out}}}.
\]

We model this as a path problem in the graph \(G\), whose vertices are the charging stations together with the depot. An edge \(\{u,v\}\) is present exactly when a fully charged vehicle can travel from \(u\) to \(v\), that is, when \(h d_{u,v} \le B\). Since the vehicle is fully recharged at every charging station, every path in \(G\) is energy-feasible. Therefore, finding the optimal sequence of intermediate charging stations is equivalent to finding a shortest path from \(f_{\text{in}}\) to \(f_{\text{out}}\) in \(G\).
\end{proof}

This allows us to precompute a distance matrix \(\mathbf{F}\), whose entries store the shortest-path distances between all pairs of charging stations, including the depot. These distances can be computed using the Floyd--Warshall algorithm~\citep{floyd1962algorithm} in \(\mathcal{O}(|V_f|^3)\) time.

We can now extend a label \(L^{i-1}\) to customer node \(\pi_i\) via a charging detour as
\begin{equation}
\begin{aligned}
    d^i_{\text{charge}}(f_{\text{in}}, f_{\text{out}})
    &= d^{i-1} + d_{\pi_{i-1}, f_{\text{in}}}
     + \mathbf{F}_{f_{\text{in}}, f_{\text{out}}}
     + d_{f_{\text{out}}, \pi_i}, \\
    q^i_{\text{charge}}(f_{\text{in}}, f_{\text{out}})
    &= q^{i-1} + q_{\pi_i}, \\
    b^i_{\text{charge}}(f_{\text{in}}, f_{\text{out}})
    &= h d_{f_{\text{out}}, \pi_i}.
\end{aligned}
\label{eq:charge}
\end{equation}
There are \((|V_f|+1)^2\) possible choices for the pair \((f_{\text{in}}, f_{\text{out}})\), where the “+1” accounts for treating the depot as a charging station.

Finally, the third way to extend a label \(L^{i-1}\) is via a depot detour, in which the vehicle may visit multiple charging stations between consecutive customers but must visit the depot. In this case, the label is extended as
\begin{equation}
\begin{aligned}
    d^i_{\text{depot}}(f_{\text{in}}, f_{\text{out}})
    &= d^{i-1} + d_{\pi_{i-1}, f_{\text{in}}}
     + \mathbf{F}_{f_{\text{in}}, \delta}
     + \mathbf{F}_{\delta, f_{\text{out}}}
     + d_{f_{\text{out}}, \pi_i}, \\
    q^i_{\text{depot}}(f_{\text{in}}, f_{\text{out}})
    &= q_{\pi_i}, \\
    b^i_{\text{depot}}(f_{\text{in}}, f_{\text{out}})
    &= h d_{f_{\text{out}}, \pi_i}.
\end{aligned}
\label{eq:depot}
\end{equation}
As in the charging-detour case, there are \((|V_f|+1)^2\) possible choices for the pair \((f_{\text{in}}, f_{\text{out}})\). We now show that this extension is optimal.

\begin{lemma}
Given \(f_{\text{in}}\) and \(f_{\text{out}}\), the optimal sequence of intermediate charging stations among all feasible sequences that visit the depot has total distance
\[
d = \mathbf{F}_{f_{\text{in}}, \delta} + \mathbf{F}_{\delta, f_{\text{out}}},
\]
where \(\mathbf{F} \in \mathbb{R}^{(|V_f|+1)\times(|V_f|+1)}\) is the matrix whose entries \(\mathbf{F}_{i,j}\) store the shortest-path distance between charging stations \(i\) and \(j\), including the depot, in the graph \(G\).
\label{lemma:depot-detour}
\end{lemma}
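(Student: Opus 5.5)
The plan is to prove the lemma by a two-sided bound: first show that every feasible sequence through the depot has length at least \(\mathbf{F}_{f_{\text{in}},\delta}+\mathbf{F}_{\delta,f_{\text{out}}}\), and then exhibit a feasible sequence through the depot that attains this value.

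\textbf{Reduction to paths in \(G\).} I first fix \(f_{\text{in}}\) and \(f_{\text{out}}\) and consider an arbitrary feasible sequence \(f_{\text{in}}, f_1,\ldots,f_K, f_{\text{out}}\) of charging stations (the depot included among them) with \(f_k=\delta\) for some \(k\). If the depot coincides with \(f_{\text{in}}\) or \(f_{\text{out}}\), I allow the degenerate case, in which the corresponding \(\mathbf{F}\) entry is zero. Feasibility between consecutive stations means each leg starts fully charged, so \(h d_{f_{j-1},f_j}\le B\) for every leg. The cargo component plays no role inside the detour; the cargo reset caused by the depot is accounted for separately in \eqref{eq:depot}. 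Hence, exactly as in Proposition~\ref{prop:shortest-path}, every such sequence is a walk in the graph \(G\) that passes through \(\delta\), and conversely every walk in \(G\) through \(\delta\) is a feasible depot-visiting sequence.

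\textbf{Lower bound.} I split the walk at its first occurrence of \(\delta\). The prefix is a walk in \(G\) from \(f_{\text{in}}\) to \(\delta\), so its length is at least the shortest-path distance \(\mathbf{F}_{f_{\text{in}},\delta}\). Likewise, the suffix is a walk from \(\delta\) to \(f_{\text{out}}\), so its length is at least \(\mathbf{F}_{\delta,f_{\text{out}}}\). Summing the two gives the bound
\[
d \ge \mathbf{F}_{f_{\text{in}},\delta}+\mathbf{F}_{\delta,f_{\text{out}}}.
\]

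\textbf{Achievability.} I take a shortest path \(P_1\) from \(f_{\text{in}}\) to \(\delta\) and a shortest path \(P_2\) from \(\delta\) to \(f_{\text{out}}\) in \(G\), both of which exist whenever the \(\mathbf{F}\) entries are finite, and concatenate them at \(\delta\). Every edge of the concatenation belongs to \(G\), so the result is energy-feasible. It visits the depot and has length exactly \(\mathbf{F}_{f_{\text{in}},\delta}+\mathbf{F}_{\delta,f_{\text{out}}}\). The concatenation may repeat stations, which is harmless since revisits are allowed. If either entry is infinite, no feasible depot-visiting sequence exists, and the formula correctly returns \(\infty\).

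\textbf{Expected obstacle.} The main subtlety is not the shortest-path argument itself. It is justifying that inserting the depot introduces no coupling beyond the cargo reset: the remaining cargo state after the depot is irrelevant within the detour, and the battery state at \(\pi_i\) depends only on \(f_{\text{out}}\), as in the proof of Proposition~\ref{prop:cs}. Because of this, minimizing distance alone yields a dominating label, and the lemma is then exactly the decomposition above.
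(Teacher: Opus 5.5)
Your proof is correct, but it takes a different route from the paper's. The paper cites Proposition~\ref{prop:shortest-path} to say the optimal sequence is a shortest path in \(G\). It then notes that shortest paths in a graph without negative cycles are simple, so the depot is visited exactly once, and splits the path there into a shortest \(f_{\text{in}}\)--\(\delta\) path and a shortest \(\delta\)--\(f_{\text{out}}\) path. You instead prove a two-sided bound over all depot-visiting walks. Splitting an arbitrary walk at its first occurrence of \(\delta\) gives the lower bound however many times the depot is visited, and concatenating two shortest paths gives a feasible walk that attains it.

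Your version is more rigorous. The depot-constrained optimum is in general not an unconstrained shortest \(f_{\text{in}}\)--\(f_{\text{out}}\) path. It may even have to be a non-simple walk, for example when \(f_{\text{in}}=f_{\text{out}}\neq\delta\) or when the two shortest paths share a station. So the simplicity argument is not really what carries the result in the paper, and the decomposition step there is asserted rather than derived. Your argument needs only nonnegative edge lengths, and it explicitly covers the degenerate cases \(f_{\text{in}}=\delta\) or \(f_{\text{out}}=\delta\) and the infeasible case of infinite entries. The paper's version is shorter and reuses Proposition~\ref{prop:shortest-path} directly.
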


\begin{proof}
By Proposition~\ref{prop:shortest-path}, the optimal sequence of intermediate charging stations is a shortest path in the graph \(G\).

A shortest path in a graph with no negative cycles never visits the same node more than once, since any repeated node would induce a cycle that could be removed without increasing the path length. Therefore, any shortest path from \(f_{\text{in}}\) to \(f_{\text{out}}\) that visits the depot visits it exactly once. Its total distance can thus be decomposed into the shortest path from \(f_{\text{in}}\) to \(\delta\) and the shortest path from \(\delta\) to \(f_{\text{out}}\), yielding
\[
d = \mathbf{F}_{f_{\text{in}}, \delta} + \mathbf{F}_{\delta, f_{\text{out}}}.
\]
This proves the claim.
\end{proof}

Now that all possible extensions have been defined, the branching factor at each step is \(\mathcal{O}(|V_f|^2)\). Although this may appear prohibitive, most generated labels are either infeasible or dominated and can therefore be discarded. Let
\begin{equation}
    \mathcal{C}_i ={} \{L^i_{\text{direct}}\} \;\cup
    \{ L^i_{\text{charge}}(u,v) \mid u,v \in V_f \cup \{\delta\} \} \;\cup
    \{ L^i_{\text{depot}}(u,v) \mid u,v \in V_f \cup \{\delta\} \}
\end{equation}
denote the set of candidate labels generated at step \(i\). The algorithm then removes all dominated labels from \(\mathcal{C}_i\). This can be done efficiently using an algorithm for computing the maxima of a point set in three dimensions, which runs in \(\mathcal{O}(|\mathcal{C}_i|\log |\mathcal{C}_i|)\) time when implemented with a balanced search tree. The remaining non-dominated labels form the Pareto front \(front_i\), from which the labels for the next step are extended. As shown empirically in the next section, these Pareto fronts remain small across all considered instances, which makes the algorithm efficient in practice. It remains to prove the optimality of the algorithm.

\begin{theorem}[Optimality of FP-FLA]
Given a fixed customer permutation \( \pi = (\pi_1, \pi_2, \ldots, \pi_n) \), Algorithm~\ref{alg:FP-FLA} returns a feasible solution of minimum total distance whenever such a solution exists. Therefore, the algorithm computes an optimal decoding of the given permutation.
\end{theorem}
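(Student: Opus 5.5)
We argue by induction on the step index $i$, with an invariant stating that $front_i$ ``covers'' every feasible partial decoding. A \emph{partial decoding up to} $\pi'_i$ is a node sequence from $\delta$ to $\pi'_i$ that contains $\pi'_1,\ldots,\pi'_{i}$ in order, with depot and charging-station visits inserted between them, and that satisfies \eqref{eq:capacity} and \eqref{eq:battery} on every prefix. Each such sequence has a state $(d,q,b)$ defined exactly as in the label semantics. The invariant is:
\begin{quote}
for every feasible partial decoding $P$ up to $\pi'_i$ with state $(d,q,b)$, there is a label $L\in front_i$ with $d^i\le d$, $q^i\le q$ and $b^i\le b$, and $L$ corresponds to some actual feasible partial decoding.
\end{quote}
The second clause (soundness) follows directly from the construction. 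Each extension rule \eqref{eq:direct}, \eqref{eq:charge}, \eqref{eq:depot} realizes a concrete node sequence: for the detours, $\pi_{i-1}\to f_{\text{in}}\to$ shortest path in $G\to f_{\text{out}}\to\pi_i$. The feasibility checks in Algorithm~\ref{alg:FP-FLA} are exactly the constraints on that sequence. Legs inside the charging-station graph are feasible because every edge of $G$ satisfies $hd\le B$, by Proposition~\ref{prop:shortest-path}.

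For completeness, the base case $i=0$ is trivial. For the inductive step, take a feasible $P$ up to $\pi'_i$ and cut it at its last occurrence of $\pi'_{i-1}$. This gives a prefix $P'$, which is a feasible partial decoding up to $\pi'_{i-1}$, and a segment $\sigma$ from $\pi'_{i-1}$ to $\pi'_i$ that contains only stations and possibly depots. By the induction hypothesis there is a label $L^{i-1}\in front_{i-1}$ that weakly dominates the state of $P'$. We split into three cases according to $\sigma$:
\begin{itemize}
\item[(a)] $\sigma$ is a direct edge;
\item[(b)] $\sigma$ contains stations but no depot, and we let $f_{\text{in}},f_{\text{out}}$ be its first and last station;
\item[(c)] $\sigma$ contains the depot, with $f_{\text{in}},f_{\text{out}}$ defined the same way (possibly $\delta$ itself).
\end{itemize}
In case (b), Propositions~\ref{prop:cs} and~\ref{prop:shortest-path} show that the internal station part of $\sigma$ has length at least $\mathbf{F}_{f_{\text{in}},f_{\text{out}}}$. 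In case (c), Lemma~\ref{lemma:depot-detour} gives a lower bound of $\mathbf{F}_{f_{\text{in}},\delta}+\mathbf{F}_{\delta,f_{\text{out}}}$. Note that the lemma only states the optimum, but the bound holds for any path because the path passes through $\delta$.

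Next we apply the same extension to $L^{i-1}$ and check monotonicity:
\begin{itemize}
\item distance is $d^{i-1}+(\text{segment cost})$, which is no larger than that of $P$;
\item cargo is $q^{i-1}+q_{\pi_i}$, or just $q_{\pi_i}$ in case (c);
\item battery is $b^{i-1}+hd_{\pi_{i-1}\pi_i}$ in case (a), and $hd_{f_{\text{out}},\pi_i}$ otherwise.
\end{itemize}
Each of these is at most the corresponding value for $P$. The extended label also passes the feasibility tests, since $b^{i-1}\le b(P')$ and $q^{i-1}\le q(P')$ while $P$ itself is feasible. A subtle point in case (b) is that when $f_{\text{in}}$ or $f_{\text{out}}$ is $\delta$ without a cargo reset, the state is treated as a charge-only visit. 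This still weakly dominates $P$ only if $P$ actually resets cargo there, and in that situation case (c) applies instead. So in every case the classification should be by whether $\delta$ appears anywhere in $\sigma$. Finally, the extended label lies in $\mathcal{C}_i$. Dominance pruning keeps, for every removed label, some remaining label that dominates it; this holds because Pareto dominance is a strict partial order on a finite set, so every candidate is dominated by a maximal element. Hence the invariant holds for $front_i$. Exact duplicate labels are not covered by the definition, since it is stated for distinct labels, but either copy can be kept.

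To conclude, apply the invariant at $i=|\pi'|-1$, where $\pi'_i=\delta$. Any optimal feasible solution $S$ is a full partial decoding, so some label in the final front has $d\le f(S)$. Soundness makes that label a feasible solution, so the minimum-distance label is optimal. If no feasible solution exists, soundness forces the front to be empty. One detail remains: the final step into $\delta$ must count as a route end and not be double-charged, and any trailing station visits before the last depot are handled by case (b) or (c).

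I expect the main obstacle to be the segment-decomposition step. It has to justify cleanly that an arbitrary feasible segment, which may revisit stations or the depot several times, costs at least the corresponding $\mathbf{F}$-based extension while ending in a state no better than the extension's. The tricky parts are the depot case and the boundary case where $\sigma$ contains only a single station, with $f_{\text{in}}=f_{\text{out}}$ and $\mathbf{F}_{f,f}=0$.
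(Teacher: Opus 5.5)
Your proof is correct. Its skeleton matches the paper's: induction over the positions of $\pi'$, a three-way split of the last inter-customer segment, Proposition~\ref{prop:shortest-path} and Lemma~\ref{lemma:depot-detour} as lower bounds on detour cost, and survival under pruning. Your invariant, however, is genuinely different, and it is the better one.

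The paper inducts on the claim that $front_i$ contains a label corresponding to an \emph{optimal} partial decoding of $(\pi_1,\dots,\pi_i)$, and extends that particular label. Taken literally, this does not close. The prefix of an optimal decoding up to $\pi_i$ need not be distance-optimal up to $\pi_{i-1}$, since an earlier detour can cost distance but save cargo or battery. The paper bridges this only with the informal remark that no pruned label could lead to a better total distance.

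Your coverage invariant is the formal content of that remark: every feasible partial decoding is weakly dominated in $(d,q,b)$ by some label of $front_i$. Together with your check that all three extension rules are monotone in the label, it makes the induction go through. It also gives you two things the paper leaves implicit:
\begin{itemize}
\item a soundness half, which certifies that the returned minimum-distance label is realizable and that an empty final front means infeasibility;
\item a cleaner depot treatment, since classifying $\sigma$ by whether $\delta$ occurs in it makes unnecessary the paper's separate argument that charge-detour labels through $\delta$ are dominated by depot-detour labels.
\end{itemize}

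The step you flag as the main obstacle is already settled by your own reasoning. In a feasible $\sigma$ every station-to-station leg has $hd\le B$, so the station part is a walk in $G$. It therefore costs at least $\mathbf{F}_{f_{\text{in}},f_{\text{out}}}$, or at least $\mathbf{F}_{f_{\text{in}},\delta}+\mathbf{F}_{\delta,f_{\text{out}}}$ after splitting at an occurrence of $\delta$. Proposition~\ref{prop:cs} is not needed, and the case $f_{\text{in}}=f_{\text{out}}$ is trivial.

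Two small corrections to the soundness clause:
\begin{itemize}
\item The feasibility checks are not \emph{exactly} the constraints of the realized sequence. When the realized station path of a charge detour visits $\delta$ (as $f_{\text{in}}$, $f_{\text{out}}$, or an intermediate node of the shortest path behind $\mathbf{F}$), the sequence resets cargo but the label does not. State soundness instead as ``the realized sequence is feasible, with the same $d$ and $b$ and cargo at most $q$''; the induction goes through unchanged.
\item If $G$ is disconnected, candidates with $\mathbf{F}_{f_{\text{in}},f_{\text{out}}}=\infty$ pass the algorithm's checks as written. They must be discarded for your realization claim and the ``no feasible solution'' conclusion to hold.
\end{itemize}
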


\begin{proof}
We prove by induction on \(i\) that the Pareto front \(front_i\) maintained by the algorithm contains at least one non-dominated label corresponding to an optimal feasible partial decoding of the prefix \((\pi_1,\dots,\pi_i)\).

\textbf{Base case:} For \(i=0\), the unique label in \(front_0\) represents the empty partial solution at the depot with zero distance, zero cargo, and zero battery consumption. This is trivially optimal.

\textbf{Inductive step:} Assume the claim holds for \(front_{i-1}\). Consider an optimal feasible partial decoding of the prefix \((\pi_1,\dots,\pi_i)\). Its last extension from \(\pi_{i-1}\) to \(\pi_i\) must be of exactly one of the following three types:
\begin{itemize}
    \item direct travel from \(\pi_{i-1}\) to \(\pi_i\),
    \item a charging detour through charging stations, with some entry--exit pair \((f_{\text{in}}, f_{\text{out}})\),
    \item a depot detour, again with some entry--exit pair \((f_{\text{in}}, f_{\text{out}})\).
\end{itemize}
By the inductive hypothesis, \(front_{i-1}\) contains a label corresponding to an optimal partial decoding up to \(\pi_{i-1}\). The algorithm generates all feasible extensions of this label of the three types listed above. By Proposition~\ref{prop:shortest-path}, the charging-detour extension is computed optimally for every pair \((f_{\text{in}}, f_{\text{out}})\), and by Lemma~\ref{lemma:depot-detour}, the same holds for depot detours.

When the depot is treated as a charging station in a charging-detour extension, the cargo component is not reset, as in the equation~\eqref{eq:charge}. Such a label is therefore never better than the corresponding depot-detour label with the same pair \((f_{\text{in}}, f_{\text{out}})\): both labels have the same distance and battery component, while the depot-detour label has no larger cargo component because the depot reset is applied explicitly. Hence, any such charging-detour label is dominated by the corresponding depot-detour label and will be discarded.

By construction, every feasible partial solution that can lead to a globally optimal solution for the first \(i\) customers is generated. Dominated solutions are removed, but by the definition of Pareto dominance, no removed solution could produce a better total distance than a non-dominated solution. Hence, at least one optimal extension remains in \(front_i\).

\textbf{Termination:} After processing all customers, \(front_n\) contains at least one label corresponding to an optimal feasible decoding of the full permutation \((\pi_1,\dots,\pi_n)\). Therefore, Algorithm~\ref{alg:FP-FLA} returns an optimal solution for the given permutation.
\end{proof}

%% file: results.tex
\section{Results}
\label{sec:results}

This section evaluates the proposed method from several perspectives. First, it examines the size of the Pareto fronts maintained during decoding. Second, it compares the solution quality of joint FPSCP decoding with that of split-then-FRVCP approaches, including both multiple- and single-station insertion strategies, as well as three heuristic alternatives. Finally, it analyzes the time required to decode a permutation. All experiments were conducted using the open-source implementation of the proposed algorithms, available at \url{https://github.com/leon3428/FPSCP}.

Customer permutations were generated in two ways. The first uses uniformly random permutations, which serve as a proxy for low-quality solutions that may arise early in the search process. The second uses a stochastic \textit{k-nearest-neighbor} (\textbf{kNN}) procedure, in which the next customer is selected uniformly at random from the $k$ nearest unvisited customers of the current node. These permutations are more structured and better reflect the high-quality orderings typically explored by permutation-based metaheuristics, where consecutive customers tend to be geographically close and only a small number of long jumps remain.

This experimental design evaluates the decoding problem in isolation, decoupling the effects of route splitting and charging-station insertion from the design choices of any particular metaheuristic. This allows decoder-level trade-offs in solution quality, robustness, and runtime to be measured directly and without confounding factors. The two permutation types span a broad range of solution quality, from the unstructured random case to the near-heuristic structured case, providing a controlled setting in which the relative performance of decoding strategies can be assessed cleanly. Since structured permutations are the more relevant setting in practice, the main text focuses on stochastic nearest-neighbor permutations, while results for uniformly random permutations are reported in \ref{app:random-permutations}.

\subsection{Front size}

The practical efficiency of FP-FLA depends strongly on the size of the Pareto front maintained at each decoding step. To examine how the front size varies with instance characteristics, an empirical study was conducted on randomly generated instances in which all node coordinates were sampled uniformly from the unit square $[0,1]^2$ using the baseline parameters listed in Table~\ref{tab:baseline}. Four parameter sweeps were considered: the number of customers, the number of charging stations, the battery capacity, and the cargo capacity, while all remaining parameters were kept fixed. Sweeping the maximum customer demand or the energy-consumption rate would be redundant, since scaling these quantities is equivalent to scaling the cargo capacity and battery capacity, respectively.

\begin{table}[htbp]
    \centering
    \caption{Baseline instance parameters.}
    \label{tab:baseline}
    \begin{tabular}{lr}
        \toprule
        \textbf{Parameter} & \textbf{Value} \\
        \midrule
        Customer count         & 100   \\
        Charging station count & 10    \\
        Battery capacity       & 2.0   \\
        Cargo capacity         & 200.0 \\
        Maximum demand         & 10.0  \\
        Energy consumption     & 1.0   \\
        \bottomrule
    \end{tabular}
\end{table}

For each point in a sweep, 32 random instances were generated, and for each instance 32 stochastic kNN ($k=2$) permutations were constructed. The maximum Pareto front size observed during the execution of the algorithm was then recorded. The resulting trends are shown in Figure~\ref{fig:sweeps}.

\begin{figure*}[htb]
    \centering
    \begin{subfigure}[b]{0.48\linewidth}
        \includegraphics[width=\linewidth]{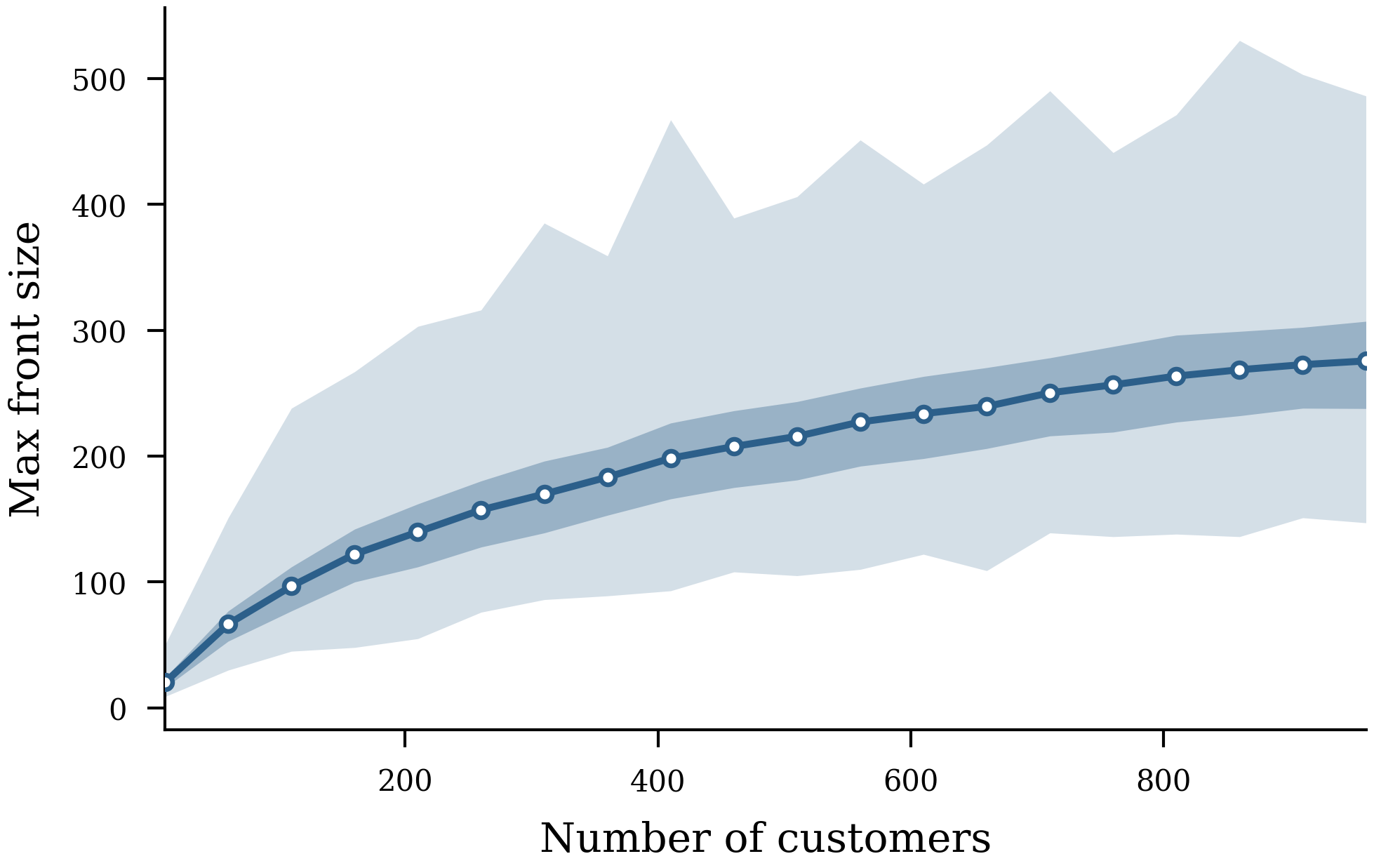}
        \caption{Max front size related to the number of customers}
        \label{fig:customer_cnt_sweep}
    \end{subfigure}
    \hfill
    \begin{subfigure}[b]{0.48\linewidth}
        \includegraphics[width=\linewidth]{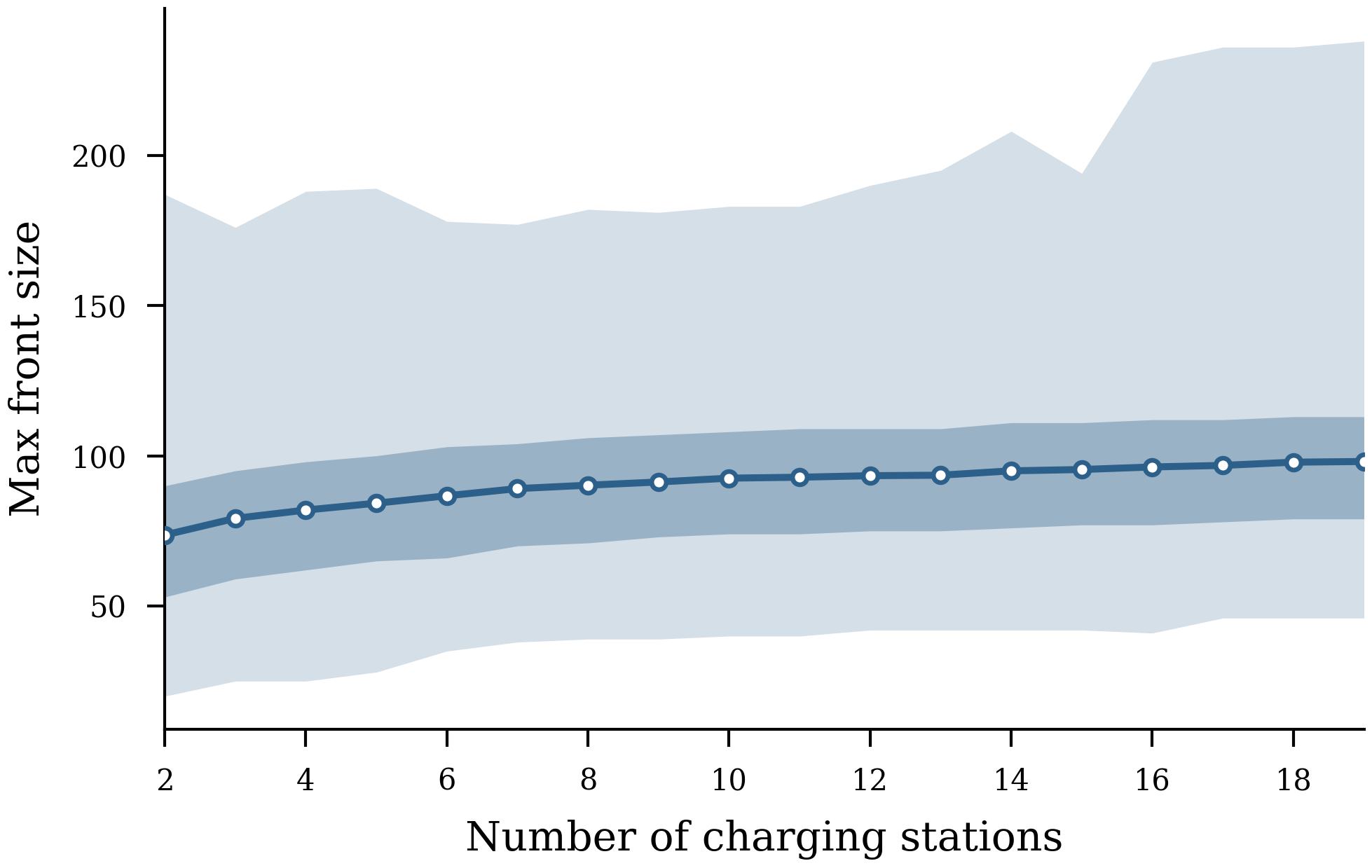}
        \caption{Max front size related to the number of charging stations}
        \label{fig:charging_station_cnt_sweep}
    \end{subfigure}

    \vspace{0.5em}

    \begin{subfigure}[b]{0.48\linewidth}
        \includegraphics[width=\linewidth]{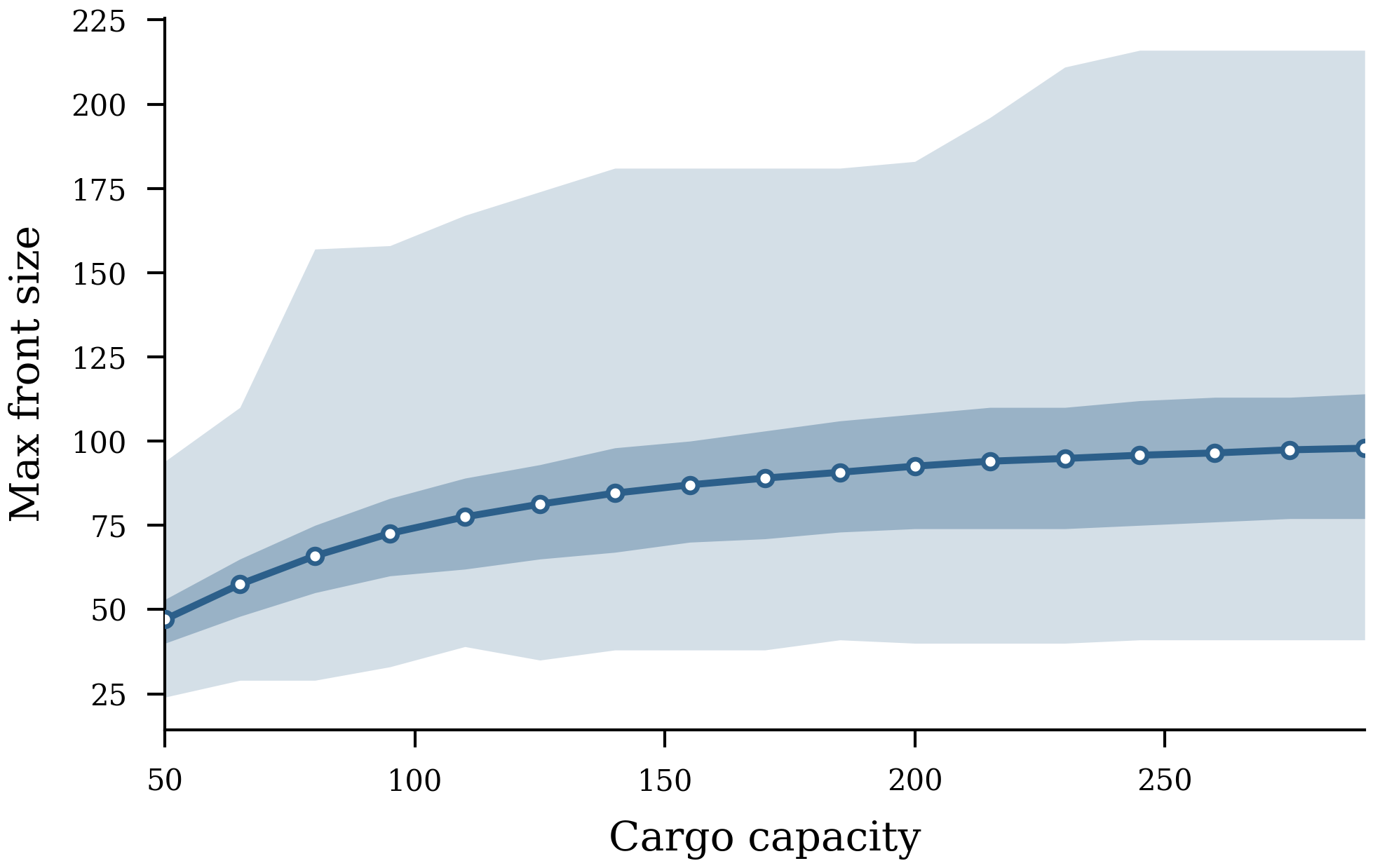}
        \caption{Max front size related to the cargo capacity.}
        \label{fig:cargo_capacity_sweep}
    \end{subfigure}
    \hfill
    \begin{subfigure}[b]{0.48\linewidth}
        \includegraphics[width=\linewidth]{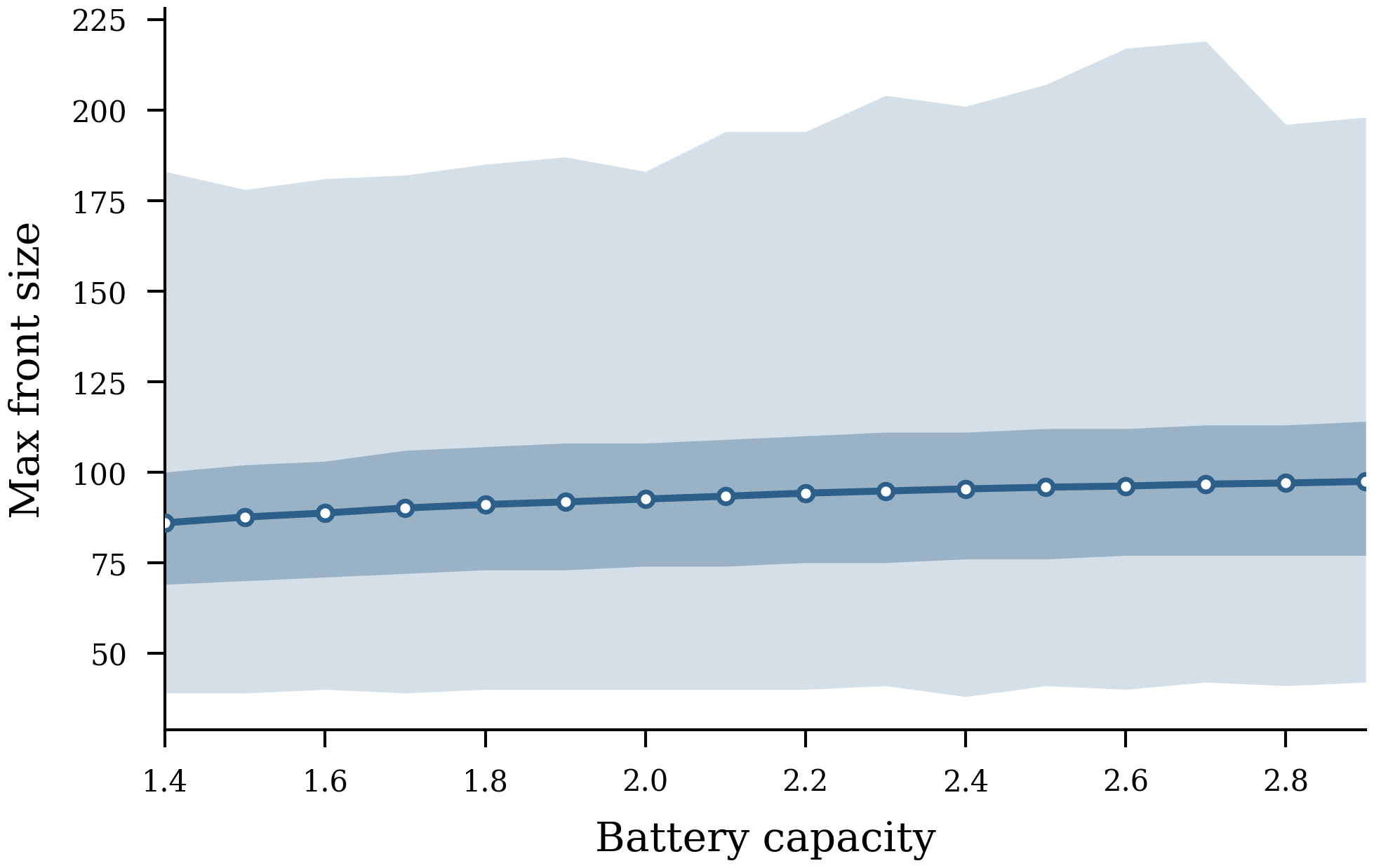}
        \caption{Max front size related to the battery capacity.}
        \label{fig:battery_capacity_sweep}
    \end{subfigure}
    \caption{Plots of the maximum Pareto front size as a function of the swept instance parameter. Each point is based on multiple randomly generated instance configurations and input permutations. The solid line shows the mean, the dark shaded region spans the first to third quartile, and the light shaded region covers the full observed range.}
    \label{fig:sweeps}
\end{figure*} 

The results indicate that the Pareto fronts remain well behaved in practice, with the largest front sizes staying on the order of $10^2$ even for larger instances. Across all four sweeps, the front size initially increases and then gradually levels off as the parameter grows. This behavior is expected: once one parameter becomes sufficiently large, the remaining instance parameters become the dominant limiting factors, so the number of viable non-dominated partial solutions no longer grows substantially. Among the parameters considered, the number of customers has the strongest effect on the front size, whereas the battery capacity has little influence as long as it is large enough for feasible solutions to exist. Although the branching factor grows quadratically with the number of charging stations, the number of customers has a greater effect on the maximal front size, at least for the instance sizes considered here. The number of charging stations affects how many candidate extensions are generated at each step, whereas the number of customers determines how many such steps must be carried out. Empirically, the latter turns out to be the more influential factor.

\subsection{Comparison with split and heuristic decoders}

This subsection compares the proposed optimal FPSCP decoder with six alternative decoding methods. For all FRVCP-based methods, the customer permutation is first partitioned into routes using Vidal's optimal linear-time split algorithm \citep{vidal2016split}.

One of the questions examined in this paper is the performance loss incurred by decoupling route splitting from charging-station insertion, that is, by first splitting the permutation into routes and then solving the resulting FRVCP instances independently. Although this decomposition is common in the literature, its impact on solution quality is rarely quantified. To isolate this effect, FP-FLA was specialized to the FRVCP setting by modifying the label so that it tracks only traveled distance and battery usage, under the assumption that route feasibility with respect to cargo capacity has already been enforced by the preceding split procedure. Under this specialization, the depot-detour extension is no longer needed. In addition, because the input now consists of customer sequences with depot visits already inserted, the battery state is reset whenever the current node is the depot. The resulting algorithm, denoted \textbf{FR-FLA} (\textit{Fixed-Route Forward Labeling Algorithm}), solves the considered FRVCP variant to optimality. Unlike the method of \citet{roberti2016electric}, FR-FLA does not rely on the assumption that every pair of charging stations can be traversed by a fully charged vehicle. Under that assumption, no more than two charging-station visits are ever needed between consecutive customers. In contrast, FR-FLA allows arbitrary sequences of intermediate charging stations while retaining the same branching factor and asymptotic complexity for the considered FRVCP variant. The comparison also includes \texttt{frvcpy} \citep{kullman2021frvcpy}, the open-source implementation of the algorithm of \citet{froger2019improved}. Since that method is also optimal for the considered FRVCP instances, it produces the same solutions as FR-FLA; the comparison is therefore relevant only in terms of runtime.

Another common simplification is to allow at most one charging-station visit between two consecutive customers. To assess the effect of this restriction, FR-FLA was further simplified by removing the double loop over all $(f_{in}, f_{out})$ pairs. This variant is denoted \textbf{SS-FR-FLA} (\textit{Single Station FR-FLA}). Applying the same restriction directly within FP-FLA is not meaningful in the same way, because FP-FLA must simultaneously decide whether to insert a depot visit or a charging-station visit, whereas in FR-FLA the depot visits are already fixed by the split solution. In preliminary experiments, imposing the restriction directly in FP-FLA led to very poor performance, as it severely limited the algorithm's ability to introduce depot returns when needed.

Finally, three heuristic FRVCP decoders were included for comparison: the decoders embedded within BHGA \citep{feng_bilevel_2024}, BACO \citep{jia2021bilevel}, and CACO \citep{jia2022confidence}. These are complete EVRP metaheuristics, each of which incorporates an FRVCP solver as its decoding component. Only this decoding component was extracted and compared here. For convenience, the remainder of the paper refers to each extracted decoder by the name of its parent method. Performance was evaluated on 1000 permutations generated by stochastic kNN sampling with $k=2$ for each WCCI-2020 benchmark instance.

\subsubsection{Percentage of solved permutations}

Methods other than FP-FLA are not guaranteed to return a feasible solution even when one exists. Since FP-FLA is exact, a permutation for which it finds no feasible solution is itself infeasible. The solve rate is therefore evaluated only over permutations that FP-FLA successfully decoded, i.e., those for which a feasible solution is known to exist. The results are reported in Table~\ref{tab:percentage_solved}.

\begin{table}[htbp]
\centering
\small
\setlength{\tabcolsep}{5pt}
\caption{Percentage of feasible stochastic kNN permutations successfully decoded by each method on the WCCI-2020 benchmark instances. Percentages are computed only over permutations for which a feasible solution exists.}
\label{tab:percentage_solved}
\begin{tabular}{lccccc}
\toprule
Instance & \makecell{FR-FLA /\\ FRVCPY} & SS-FR-FLA & BHGA & BACO & CACO \\
\midrule
E-n22-k4    & 100.0\% & 100.0\% & 89.0\% & 97.2\% & 97.2\% \\
E-n23-k3    & 100.0\% & 100.0\% & 86.5\% & 100.0\% & 100.0\% \\
E-n30-k3    & 100.0\% & 100.0\% & 96.7\% & 100.0\% & 100.0\% \\
E-n33-k4    & 100.0\% & 100.0\% & 98.1\% & 100.0\% & 100.0\% \\
E-n51-k5    & 100.0\% & 100.0\% & 87.5\% & 100.0\% & 100.0\% \\
E-n76-k7    & 100.0\% & 100.0\% & 71.9\% & 96.5\% & 96.5\% \\
E-n101-k8   & 100.0\% & 100.0\% & 66.5\% & 95.9\% & 95.9\% \\
X-n143-k7   & 100.0\% & 100.0\% & 99.6\% & 100.0\% & 100.0\% \\
X-n214-k11  & 100.0\% & 100.0\% & 84.3\% & 100.0\% & 100.0\% \\
X-n351-k40  & 100.0\% & 85.8\%  & 26.4\% & 62.1\% & 62.1\% \\
X-n459-k26  & 100.0\% & 100.0\% & 56.6\% & 95.5\% & 95.5\% \\
X-n573-k30  & 100.0\% & 100.0\% & 99.7\% & 100.0\% & 100.0\% \\
X-n685-k75  & 100.0\% & 99.3\%  & 49.1\% & 83.1\% & 83.1\% \\
X-n749-k98  & 100.0\% & 99.7\%  & 49.0\% & 85.1\% & 85.1\% \\
X-n819-k171 & 100.0\% & 99.5\%  & 69.6\% & 90.6\% & 90.6\% \\
X-n916-k207 & 100.0\% & 100.0\% & 97.2\% & 100.0\% & 100.0\% \\
X-n1001-k43 & 100.0\% & 100.0\% & 93.3\% & 100.0\% & 100.0\% \\
\midrule
\textbf{Aggregate}   & \textbf{100.0\%} & \textbf{99.1\%}  & \textbf{77.7\%} & \textbf{94.5\%} & \textbf{94.5\%} \\
\bottomrule
\end{tabular}
\end{table}

For the considered instances, split decoding followed by optimal FRVCP solution, represented by FR-FLA, solved all such permutations. The single-station variant SS-FR-FLA also performed well on many instances and, in several cases, solved nearly all sampled permutations. However, its solved percentage dropped to \(85.8\%\) on instance X-n351-k40, indicating that there are realistic benchmark instances for which restricting the decoder to at most one charging-station visit between consecutive customers can exclude feasible solutions, even when a permutation is split into routes optimally in advance.

The heuristic methods performed noticeably worse. Although these algorithms were designed primarily to produce charging plans for near-optimal permutations rather than to guarantee feasibility whenever possible, the results indicate that they discard a considerable number of feasible permutations. This is particularly relevant in the present setting, where stochastic kNN sampling generates structured permutations with geographically close consecutive customers and only occasional long jumps. Such permutations are representative of the solutions typically encountered during metaheuristic search, where a candidate solution is often already of fairly high quality and differs from a strong solution by only a small number of local modifications, such as 2-opt moves. On the most difficult instance, X-n351-k40, BHGA solved only \(26.4\%\) of feasible permutations, while BACO and CACO each solved \(62.1\%\). Consequently, many potentially promising candidate solutions would never be evaluated further if these decoders were used inside a permutation-based metaheuristic.

\subsubsection{Performance Gap Analysis}

Since the main distinction between the methods lies in the quality of the solutions they produce, we quantify this difference using the percentage performance gap with respect to the solution found by FP-FLA. Because FP-FLA is optimal for a fixed permutation, no lower-cost solution exists for that permutation, making it a natural reference. The performance gap was computed only over the permutations that a given method successfully solved. Since the permutations were sampled, this yielded, for each method--instance pair, an empirical distribution of percentage performance gaps.

For some fraction of the solved permutations, a method matched the FP-FLA solution exactly. This zero-gap percentage is reported in Table~\ref{tab:zero_p90_p95_knn}, together with the 90th and 95th percentiles of the gap distribution. The zero-gap percentage indicates how often a method attains the optimal solution for the sampled permutation, while p90 and p95 characterize the upper tail of the distribution. For FR-FLA/FRVCPY and SS-FR-FLA, the gap distributions were consistently strongly right-skewed, with most permutations concentrated near zero gap and a relatively long tail of worse outcomes. As expected, increasing the number of samples tends to reveal a longer tail. In contrast, the distributions of BHGA, BACO, and CACO were more instance-dependent and ranged from similarly skewed to more symmetric shapes.

\begin{table}[htbp]
\centering
\footnotesize
\setlength{\tabcolsep}{4pt}
\renewcommand{\arraystretch}{1.15}
\caption{Performance-gap summary for stochastic kNN permutations on the WCCI-2020 benchmark instances. Bold values denote the zero-gap percentage, defined as the share of successfully decoded permutations for which the method matches the optimal FP-FLA solution. The p90 and p95 values are the 90th and 95th percentiles of the percentage performance-gap distribution relative to FP-FLA, computed over successfully decoded permutations. Lower p90 and p95 values indicate smaller upper-tail losses.}
\label{tab:zero_p90_p95_knn}
\begin{tabular}{lccccc}
\toprule
Instance & \makecell{FR-FLA/\\FRVCPY} & SS-FR-FLA & BHGA & BACO & CACO \\
\midrule
E-n22-k4 & \makecell[tc]{\textbf{86.5\%}\\[-1pt] \scriptsize p90: 0.03\%\\[-1pt] \scriptsize p95: 0.19\%} & \makecell[tc]{\textbf{77.8\%}\\[-1pt] \scriptsize p90: 0.16\%\\[-1pt] \scriptsize p95: 0.27\%} & \makecell[tc]{\textbf{62.8\%}\\[-1pt] \scriptsize p90: 0.54\%\\[-1pt] \scriptsize p95: 0.95\%} & \makecell[tc]{\textbf{52.4\%}\\[-1pt] \scriptsize p90: 0.57\%\\[-1pt] \scriptsize p95: 1.12\%} & \makecell[tc]{\textbf{59.1\%}\\[-1pt] \scriptsize p90: 0.55\%\\[-1pt] \scriptsize p95: 1.09\%} \\
E-n23-k3 & \makecell[tc]{\textbf{96.4\%}\\[-1pt] \scriptsize p90: 0.00\%\\[-1pt] \scriptsize p95: 0.00\%} & \makecell[tc]{\textbf{75.4\%}\\[-1pt] \scriptsize p90: 0.12\%\\[-1pt] \scriptsize p95: 0.26\%} & \makecell[tc]{\textbf{48.1\%}\\[-1pt] \scriptsize p90: 0.87\%\\[-1pt] \scriptsize p95: 1.30\%} & \makecell[tc]{\textbf{47.3\%}\\[-1pt] \scriptsize p90: 0.68\%\\[-1pt] \scriptsize p95: 0.90\%} & \makecell[tc]{\textbf{43.1\%}\\[-1pt] \scriptsize p90: 0.94\%\\[-1pt] \scriptsize p95: 1.30\%} \\
E-n30-k3 & \makecell[tc]{\textbf{91.7\%}\\[-1pt] \scriptsize p90: 0.00\%\\[-1pt] \scriptsize p95: 0.10\%} & \makecell[tc]{\textbf{88.2\%}\\[-1pt] \scriptsize p90: 0.02\%\\[-1pt] \scriptsize p95: 0.13\%} & \makecell[tc]{\textbf{61.1\%}\\[-1pt] \scriptsize p90: 0.36\%\\[-1pt] \scriptsize p95: 0.71\%} & \makecell[tc]{\textbf{54.9\%}\\[-1pt] \scriptsize p90: 0.51\%\\[-1pt] \scriptsize p95: 0.97\%} & \makecell[tc]{\textbf{57.2\%}\\[-1pt] \scriptsize p90: 0.52\%\\[-1pt] \scriptsize p95: 0.99\%} \\
E-n33-k4 & \makecell[tc]{\textbf{92.5\%}\\[-1pt] \scriptsize p90: 0.00\%\\[-1pt] \scriptsize p95: 0.06\%} & \makecell[tc]{\textbf{90.3\%}\\[-1pt] \scriptsize p90: 0.00\%\\[-1pt] \scriptsize p95: 0.08\%} & \makecell[tc]{\textbf{82.7\%}\\[-1pt] \scriptsize p90: 0.09\%\\[-1pt] \scriptsize p95: 0.24\%} & \makecell[tc]{\textbf{66.2\%}\\[-1pt] \scriptsize p90: 0.32\%\\[-1pt] \scriptsize p95: 0.47\%} & \makecell[tc]{\textbf{70.0\%}\\[-1pt] \scriptsize p90: 0.32\%\\[-1pt] \scriptsize p95: 0.49\%} \\
E-n51-k5 & \makecell[tc]{\textbf{82.6\%}\\[-1pt] \scriptsize p90: 0.09\%\\[-1pt] \scriptsize p95: 0.23\%} & \makecell[tc]{\textbf{68.5\%}\\[-1pt] \scriptsize p90: 0.18\%\\[-1pt] \scriptsize p95: 0.30\%} & \makecell[tc]{\textbf{48.1\%}\\[-1pt] \scriptsize p90: 0.48\%\\[-1pt] \scriptsize p95: 0.67\%} & \makecell[tc]{\textbf{34.9\%}\\[-1pt] \scriptsize p90: 0.52\%\\[-1pt] \scriptsize p95: 0.73\%} & \makecell[tc]{\textbf{39.4\%}\\[-1pt] \scriptsize p90: 0.51\%\\[-1pt] \scriptsize p95: 0.73\%} \\
E-n76-k7 & \makecell[tc]{\textbf{64.2\%}\\[-1pt] \scriptsize p90: 0.37\%\\[-1pt] \scriptsize p95: 0.52\%} & \makecell[tc]{\textbf{49.2\%}\\[-1pt] \scriptsize p90: 0.42\%\\[-1pt] \scriptsize p95: 0.61\%} & \makecell[tc]{\textbf{29.6\%}\\[-1pt] \scriptsize p90: 0.76\%\\[-1pt] \scriptsize p95: 0.96\%} & \makecell[tc]{\textbf{17.3\%}\\[-1pt] \scriptsize p90: 0.93\%\\[-1pt] \scriptsize p95: 1.20\%} & \makecell[tc]{\textbf{20.1\%}\\[-1pt] \scriptsize p90: 0.97\%\\[-1pt] \scriptsize p95: 1.29\%} \\
E-n101-k8 & \makecell[tc]{\textbf{57.8\%}\\[-1pt] \scriptsize p90: 0.40\%\\[-1pt] \scriptsize p95: 0.51\%} & \makecell[tc]{\textbf{43.5\%}\\[-1pt] \scriptsize p90: 0.42\%\\[-1pt] \scriptsize p95: 0.59\%} & \makecell[tc]{\textbf{25.4\%}\\[-1pt] \scriptsize p90: 0.74\%\\[-1pt] \scriptsize p95: 0.97\%} & \makecell[tc]{\textbf{15.8\%}\\[-1pt] \scriptsize p90: 0.81\%\\[-1pt] \scriptsize p95: 1.01\%} & \makecell[tc]{\textbf{19.1\%}\\[-1pt] \scriptsize p90: 0.82\%\\[-1pt] \scriptsize p95: 1.01\%} \\
X-n143-k7 & \makecell[tc]{\textbf{56.3\%}\\[-1pt] \scriptsize p90: 0.39\%\\[-1pt] \scriptsize p95: 0.55\%} & \makecell[tc]{\textbf{54.1\%}\\[-1pt] \scriptsize p90: 0.39\%\\[-1pt] \scriptsize p95: 0.56\%} & \makecell[tc]{\textbf{42.5\%}\\[-1pt] \scriptsize p90: 0.56\%\\[-1pt] \scriptsize p95: 0.76\%} & \makecell[tc]{\textbf{13.5\%}\\[-1pt] \scriptsize p90: 0.96\%\\[-1pt] \scriptsize p95: 1.19\%} & \makecell[tc]{\textbf{16.3\%}\\[-1pt] \scriptsize p90: 0.93\%\\[-1pt] \scriptsize p95: 1.17\%} \\
X-n214-k11 & \makecell[tc]{\textbf{46.6\%}\\[-1pt] \scriptsize p90: 0.27\%\\[-1pt] \scriptsize p95: 0.38\%} & \makecell[tc]{\textbf{40.4\%}\\[-1pt] \scriptsize p90: 0.28\%\\[-1pt] \scriptsize p95: 0.38\%} & \makecell[tc]{\textbf{16.8\%}\\[-1pt] \scriptsize p90: 0.56\%\\[-1pt] \scriptsize p95: 0.73\%} & \makecell[tc]{\textbf{12.2\%}\\[-1pt] \scriptsize p90: 0.53\%\\[-1pt] \scriptsize p95: 0.67\%} & \makecell[tc]{\textbf{13.3\%}\\[-1pt] \scriptsize p90: 0.56\%\\[-1pt] \scriptsize p95: 0.73\%} \\
X-n351-k40 & \makecell[tc]{\textbf{35.1\%}\\[-1pt] \scriptsize p90: 0.11\%\\[-1pt] \scriptsize p95: 0.15\%} & \makecell[tc]{\textbf{13.5\%}\\[-1pt] \scriptsize p90: 0.19\%\\[-1pt] \scriptsize p95: 0.26\%} & \makecell[tc]{\textbf{9.1\%}\\[-1pt] \scriptsize p90: 0.25\%\\[-1pt] \scriptsize p95: 0.32\%} & \makecell[tc]{\textbf{1.6\%}\\[-1pt] \scriptsize p90: 0.36\%\\[-1pt] \scriptsize p95: 0.43\%} & \makecell[tc]{\textbf{2.3\%}\\[-1pt] \scriptsize p90: 0.36\%\\[-1pt] \scriptsize p95: 0.43\%} \\
X-n459-k26 & \makecell[tc]{\textbf{30.2\%}\\[-1pt] \scriptsize p90: 0.17\%\\[-1pt] \scriptsize p95: 0.22\%} & \makecell[tc]{\textbf{19.8\%}\\[-1pt] \scriptsize p90: 0.18\%\\[-1pt] \scriptsize p95: 0.24\%} & \makecell[tc]{\textbf{8.5\%}\\[-1pt] \scriptsize p90: 0.33\%\\[-1pt] \scriptsize p95: 0.43\%} & \makecell[tc]{\textbf{2.6\%}\\[-1pt] \scriptsize p90: 0.37\%\\[-1pt] \scriptsize p95: 0.45\%} & \makecell[tc]{\textbf{2.9\%}\\[-1pt] \scriptsize p90: 0.40\%\\[-1pt] \scriptsize p95: 0.50\%} \\
X-n573-k30 & \makecell[tc]{\textbf{18.5\%}\\[-1pt] \scriptsize p90: 0.11\%\\[-1pt] \scriptsize p95: 0.15\%} & \makecell[tc]{\textbf{18.2\%}\\[-1pt] \scriptsize p90: 0.11\%\\[-1pt] \scriptsize p95: 0.15\%} & \makecell[tc]{\textbf{14.8\%}\\[-1pt] \scriptsize p90: 0.13\%\\[-1pt] \scriptsize p95: 0.17\%} & \makecell[tc]{\textbf{4.0\%}\\[-1pt] \scriptsize p90: 0.34\%\\[-1pt] \scriptsize p95: 0.41\%} & \makecell[tc]{\textbf{4.4\%}\\[-1pt] \scriptsize p90: 0.34\%\\[-1pt] \scriptsize p95: 0.42\%} \\
X-n685-k75 & \makecell[tc]{\textbf{14.9\%}\\[-1pt] \scriptsize p90: 0.07\%\\[-1pt] \scriptsize p95: 0.09\%} & \makecell[tc]{\textbf{7.6\%}\\[-1pt] \scriptsize p90: 0.09\%\\[-1pt] \scriptsize p95: 0.11\%} & \makecell[tc]{\textbf{3.3\%}\\[-1pt] \scriptsize p90: 0.10\%\\[-1pt] \scriptsize p95: 0.13\%} & \makecell[tc]{\textbf{0.0\%}\\[-1pt] \scriptsize p90: 2.63\%\\[-1pt] \scriptsize p95: 2.71\%} & \makecell[tc]{\textbf{0.0\%}\\[-1pt] \scriptsize p90: 2.61\%\\[-1pt] \scriptsize p95: 2.70\%} \\
X-n749-k98 & \makecell[tc]{\textbf{10.4\%}\\[-1pt] \scriptsize p90: 0.05\%\\[-1pt] \scriptsize p95: 0.06\%} & \makecell[tc]{\textbf{5.4\%}\\[-1pt] \scriptsize p90: 0.06\%\\[-1pt] \scriptsize p95: 0.09\%} & \makecell[tc]{\textbf{3.3\%}\\[-1pt] \scriptsize p90: 0.08\%\\[-1pt] \scriptsize p95: 0.10\%} & \makecell[tc]{\textbf{0.0\%}\\[-1pt] \scriptsize p90: 0.35\%\\[-1pt] \scriptsize p95: 0.38\%} & \makecell[tc]{\textbf{0.0\%}\\[-1pt] \scriptsize p90: 0.35\%\\[-1pt] \scriptsize p95: 0.38\%} \\
X-n819-k171 & \makecell[tc]{\textbf{46.6\%}\\[-1pt] \scriptsize p90: 0.01\%\\[-1pt] \scriptsize p95: 0.01\%} & \makecell[tc]{\textbf{27.9\%}\\[-1pt] \scriptsize p90: 0.02\%\\[-1pt] \scriptsize p95: 0.02\%} & \makecell[tc]{\textbf{19.8\%}\\[-1pt] \scriptsize p90: 0.02\%\\[-1pt] \scriptsize p95: 0.03\%} & \makecell[tc]{\textbf{0.0\%}\\[-1pt] \scriptsize p90: 0.72\%\\[-1pt] \scriptsize p95: 0.77\%} & \makecell[tc]{\textbf{0.0\%}\\[-1pt] \scriptsize p90: 0.72\%\\[-1pt] \scriptsize p95: 0.77\%} \\
X-n916-k207 & \makecell[tc]{\textbf{27.1\%}\\[-1pt] \scriptsize p90: 0.01\%\\[-1pt] \scriptsize p95: 0.01\%} & \makecell[tc]{\textbf{25.1\%}\\[-1pt] \scriptsize p90: 0.01\%\\[-1pt] \scriptsize p95: 0.01\%} & \makecell[tc]{\textbf{20.2\%}\\[-1pt] \scriptsize p90: 0.01\%\\[-1pt] \scriptsize p95: 0.02\%} & \makecell[tc]{\textbf{0.0\%}\\[-1pt] \scriptsize p90: 0.68\%\\[-1pt] \scriptsize p95: 0.70\%} & \makecell[tc]{\textbf{0.0\%}\\[-1pt] \scriptsize p90: 0.68\%\\[-1pt] \scriptsize p95: 0.70\%} \\
X-n1001-k43 & \makecell[tc]{\textbf{12.5\%}\\[-1pt] \scriptsize p90: 0.12\%\\[-1pt] \scriptsize p95: 0.16\%} & \makecell[tc]{\textbf{11.3\%}\\[-1pt] \scriptsize p90: 0.13\%\\[-1pt] \scriptsize p95: 0.17\%} & \makecell[tc]{\textbf{5.7\%}\\[-1pt] \scriptsize p90: 0.18\%\\[-1pt] \scriptsize p95: 0.23\%} & \makecell[tc]{\textbf{0.0\%}\\[-1pt] \scriptsize p90: 0.96\%\\[-1pt] \scriptsize p95: 1.05\%} & \makecell[tc]{\textbf{0.0\%}\\[-1pt] \scriptsize p90: 0.97\%\\[-1pt] \scriptsize p95: 1.04\%} \\
\bottomrule
\end{tabular}
\end{table}

To compare these distributions fairly across methods and instances, Figure~\ref{fig:ecdf} shows the \textit{empirical cumulative distribution function} (\textbf{ECDF}) of the performance gaps for every method on every instance. For a given value on the x-axis, the ECDF gives the proportion of solved permutations whose gap is at most that value. Hence, curves that rise faster and lie further to the left indicate better performance, since they place more probability mass on smaller gaps. The x-axis is shown on a logarithmic scale to accommodate the wide range of observed gap magnitudes. The ECDF of the optimal FP-FLA method is shown as a red reference line; the closer another method is to this line, the better its performance.

Overall, after FP-FLA, which is optimal by construction, FR-FLA/FRVCPY performs best, followed by SS-FR-FLA. BHGA is generally competitive and often not far behind, whereas BACO and CACO are only slightly worse on smaller instances but deteriorate substantially on larger ones. To put these differences into perspective, BHGA improved the best known solution for X-n916-k207 by 1.77\%, which is considered a substantial improvement. Thus, even seemingly small percentage differences can be practically important, especially on a benchmark set such as WCCI-2020, where solutions have already been heavily optimized.

\begin{figure*}[htbp]
    \centering
    \includegraphics[width=\textwidth]{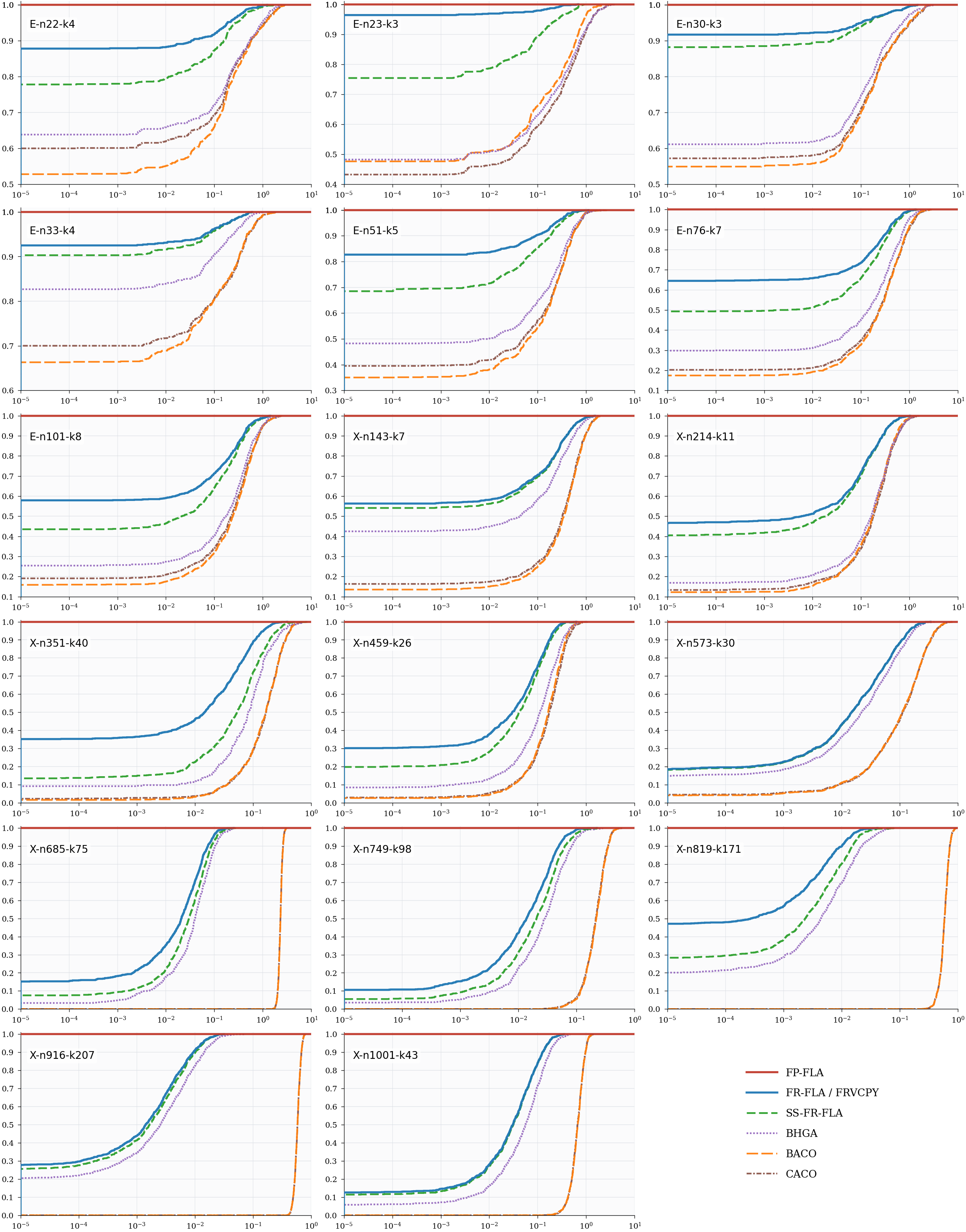}
    \caption{Empirical cumulative distribution functions of percentage performance gaps for all methods and WCCI-2020 benchmark instances using stochastic kNN permutations. Gaps are computed relative to the optimal FP-FLA decoding and only over permutations successfully decoded by each method. For a given x-axis value, the curve gives the proportion of decoded permutations with a performance gap no larger than that value. Curves that rise earlier and lie further left indicate better solution quality. The x-axis is logarithmic to show both small and large gaps.}
    \label{fig:ecdf}
\end{figure*}

\subsubsection{Decoding Time Analysis}

The final performance aspect to consider is the time required to decode a permutation. For methods not developed in this work, we measured the runtime of the open-source implementations provided by their authors. The results are shown in Figure~\ref{fig:time}.

Among all methods, FRVCPY is by far the slowest. This is expected, since it solves a more general problem that allows partial recharges. In comparison, our equivalent FR-FLA is substantially faster. FP-FLA requires more time than FR-FLA, whereas SS-FR-FLA is faster, which is also consistent with the relative complexity of the corresponding decoding procedures. The three heuristic methods are the fastest overall and have roughly comparable execution times.

The figure also reveals that the runtime gap between our methods and the heuristic decoders narrows as instance size increases. On medium and large instances, SS-FR-FLA performs roughly on par with the heuristic methods in terms of median decoding time, which is particularly notable given that it retains a much more structured decoding procedure.

Although the heuristic methods generally achieve lower median decoding times, SS-FR-FLA's and FR-FLA's third quartiles are often quite similar to those of the heuristics. This suggests that while their typical-case performance differs, their slower decoding cases are of comparable magnitude. In other words, the differences between these methods are more visible in the median behavior than in the upper part of the runtime distribution.

The sensitivity of our three methods to the number and spatial distribution of charging stations is clearly visible in instances such as X-n351-k40, X-n685-k75, and X-n749-k98, which contain a high density of charging stations. On these instances, the differences in runtime between FP-FLA, FR-FLA, and SS-FR-FLA become particularly pronounced. This indicates that decoding time is influenced not only by the number of customers, but also by structural properties of the instance itself.

These runtime differences are practically important because a decoder is typically invoked a very large number of times within a metaheuristic search. Consequently, even modest differences in per-permutation runtime can accumulate into substantial differences in total optimization time.

\begin{figure*}[htbp]
    \centering
    \includegraphics[width=\textwidth]{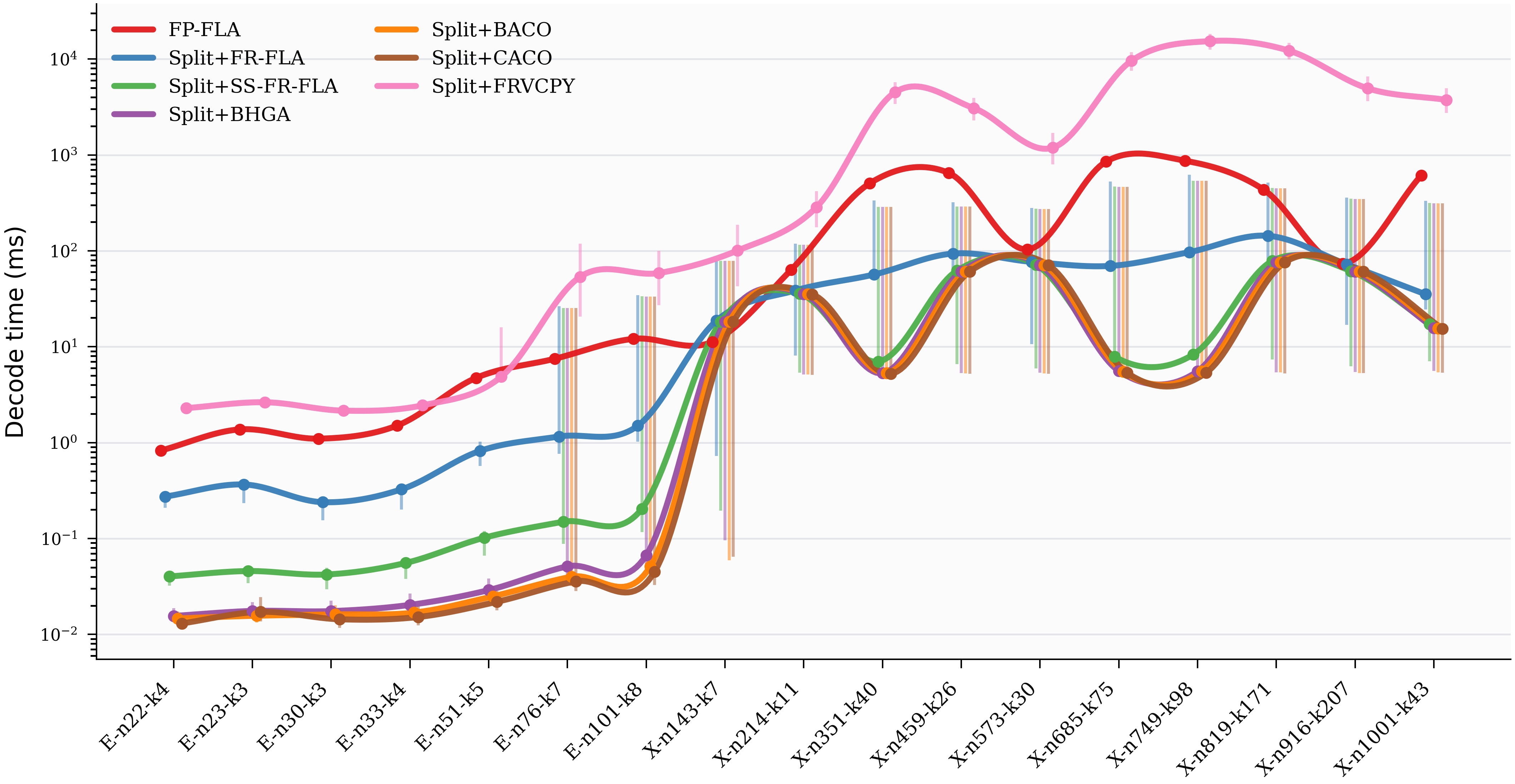}
    \caption{Median permutation decoding time on the WCCI-2020 benchmark instances using stochastic kNN \(k=2\) permutations. Instances are ordered by the number of customers. Points show the median decoding time, and vertical bars span the first to third quartiles of the measured decoding times. The y-axis is logarithmic.}
    \label{fig:time}
\end{figure*}

\subsection{Discussion and Practical Implications}

Taken together, the results show a clear trade-off between decoding time and solution quality, but also show that this trade-off is not equally favorable for all methods. Among all tested methods, SS-FR-FLA seems to offer the best balance for use inside a heuristic optimization loop. Its runtime is close to that of the heuristic decoders, especially on larger instances, while its performance gap is much smaller and the percentage of solved instances remains high. This makes it the most promising candidate for repeated use inside a metaheuristic. FR-FLA is also relevant, especially in the cases where SS-FR-FLA cannot decode a permutation or when a smaller performance gap is desired. However, the gain in solution quality over SS-FR-FLA is usually not large enough to justify the runtime increase. Because of that, FR-FLA does not seem as attractive as the default decoder inside the optimization loop, but it can still be useful in selected situations. FP-FLA gives the strongest guarantee since it always finds the optimal decoding for a given permutation, but it is probably too slow to be used as the main decoder inside the optimization loop. Still, with decode times below one second on the tested instances, it remains practical for improving final solutions, validating the quality of other decoders, or solving particularly difficult cases where exact decoding is justified.

Another important observation is that instances and permutations for which split decoding is not optimal seem to be much more common than is usually suggested. The measured performance gaps are often small in relative terms, but they are systematic and clearly visible across many instances. The results also show a clear limitation of the heuristic solvers. While their decoding times are low, they fail to solve a substantial fraction of permutations on many instances, especially in the random ordering setting and on the larger instances. This is an important drawback in the context of heuristic optimization, because an unsolved permutation eliminates a whole subset of solutions from ever being considered. In that sense, the percentage of solved instances is not just a secondary metric, but an important part of decoder robustness.

Overall, the experiments suggest a clear practical picture. SS-FR-FLA appears to be the best default choice for heuristic optimization, since it combines near-heuristic runtime with substantially better robustness and solution quality. FR-FLA can still be useful when some additional improvement is needed, while FP-FLA is best suited for final improvement, validation, and hard cases.

%% file: conclusion.tex
\section{Conclusion}
\label{sec:conclusion}

Permutation-based metaheuristics for the EVRP require a decoding step that transforms a customer ordering into a complete, feasible solution. This paper addressed the decoding problem directly by introducing the Fixed-Permutation Splitting and Charging Problem, which captures the joint nature of route splitting and charging-station insertion from a fixed customer ordering. We proposed FP-FLA, an exact forward labeling algorithm that solves the FPSCP optimally within a dynamic programming framework with dominance pruning, and proved its optimality. We further derived two restricted variants, a split-then-charge decoder and a single-station variant, that correspond to simplifications commonly found in the literature, and compared all proposed methods against three heuristic FRVCP decoders on the WCCI-2020 benchmark.

The experiments revealed that common decoding simplifications carry a measurable cost. Decoupling route splitting from charging-station insertion leads to systematic, if often small, losses in solution quality relative to joint decoding. Restricting the number of charging-station visits to at most one per inter-customer segment reduces robustness further and, on some instances, excludes feasible solutions entirely. The heuristic decoders are the fastest, but fail to decode a substantial fraction of feasible permutations, which is a significant drawback when every unsolved permutation effectively removes a candidate solution from the search space.

Despite its larger theoretical branching factor, FP-FLA remained tractable on all tested instances, with Pareto fronts staying well-bounded throughout decoding. Among the proposed variants, SS-FR-FLA appears best suited for repeated use inside a metaheuristic: it matches heuristic decoder runtimes on medium and large instances while offering substantially better robustness and solution quality. FR-FLA provides a useful intermediate option when higher decoding accuracy is needed, and FP-FLA is well suited for solution validation, final improvement, and cases where exact decoding is justified.

Future work may study how the decoder-level trade-offs characterized here translate into full metaheuristic performance. On the algorithmic side, promising directions include extending the framework to richer EVRP variants such as partial recharging, time windows, or nonlinear charging functions, as well as developing additional pruning rules and preprocessing techniques to further reduce decoding time while preserving exactness.

%% file: appendix.tex
\clearpage
\appendix

\section{Results for uniformly random permutations}
\label{app:random-permutations}

This appendix reports additional experiments using uniformly random customer permutations. In contrast to the stochastic kNN permutations
used in the main text, uniformly random permutations are less representative of solutions typically encountered in a search procedure. However, they provide a useful stress test for decoder robustness, especially in early search stages when candidate permutations may be poorly structured.

The experimental setup is the same as in Section~\ref{sec:results}. For each WCCI-2020 benchmark instance, 1000 uniformly random customer permutations were generated. All performance gaps are computed relative to the optimal FP-FLA decoding of the same permutation and only over permutations successfully decoded by the method being evaluated. To keep the presentation compact and coherent, all appendix results are presented first, followed by a short discussion at the end of the appendix.

\begin{table}[htbp]
\centering
\small
\setlength{\tabcolsep}{5pt}
\caption{Percentage of feasible uniformly random permutations successfully decoded by each method on the WCCI-2020 benchmark instances. Percentages are computed only over permutations for which a feasible solution exists.}
\label{tab:solved_random}
\begin{tabular}{lccccc}
\toprule
Instance & \makecell{FR-FLA /\\ FRVCPY} & SS-FR-FLA & BHGA & BACO & CACO \\
\midrule
E-n22-k4     & 100.0\% & 100.0\% & 64.9\% & 94.5\%  & 0.0\% \\
E-n23-k3     & 100.0\% & 100.0\% & 56.9\% & 100.0\% & 3.1\% \\
E-n30-k3     & 100.0\% & 100.0\% & 43.3\% & 100.0\% & 0.4\% \\
E-n33-k4     & 100.0\% & 100.0\% & 99.3\% & 100.0\% & 0.0\% \\
E-n51-k5     & 100.0\% & 100.0\% & 42.6\% & 100.0\% & 0.0\% \\
E-n76-k7     & 100.0\% & 100.0\% & 2.7\%  & 93.3\%  & 0.0\% \\
E-n101-k8    & 100.0\% & 100.0\% & 1.2\%  & 96.1\%  & 0.0\% \\
X-n143-k7    & 100.0\% & 100.0\% & 40.2\% & 100.0\% & 0.0\% \\
X-n214-k11   & 100.0\% & 100.0\% & 36.7\% & 100.0\% & 0.0\% \\
X-n351-k40   & 100.0\% & 90.2\%  & 0.0\%  & 3.9\%   & 0.0\% \\
X-n459-k26   & 100.0\% & 100.0\% & 0.0\%  & 78.5\%  & 0.0\% \\
X-n573-k30   & 100.0\% & 100.0\% & 94.9\% & 100.0\% & 0.0\% \\
X-n685-k75   & 100.0\% & 97.8\%  & 0.0\%  & 0.9\%   & 0.0\% \\
X-n749-k98   & 100.0\% & 100.0\% & 0.0\%  & 24.5\%  & 0.0\% \\
X-n819-k171  & 100.0\% & 98.0\%  & 0.0\%  & 0.5\%   & 0.0\% \\
X-n916-k207  & 100.0\% & 100.0\% & 38.8\% & 100.0\% & 0.0\% \\
X-n1001-k43  & 100.0\% & 100.0\% & 0.0\%  & 100.0\% & 0.0\% \\
\midrule
\textbf{Aggregate}    & \textbf{100.0\%} & \textbf{99.2\%}  & \textbf{30.7\%} & \textbf{76.0\%}  & \textbf{0.2\%} \\
\bottomrule
\end{tabular}
\end{table}

\begin{table}[htbp]
\centering
\footnotesize
\setlength{\tabcolsep}{4pt}
\renewcommand{\arraystretch}{1.15}
\caption{Performance-gap summary for uniformly random permutations on the WCCI-2020 benchmark instances. Bold values denote the zero-gap percentage, defined as the share of successfully decoded permutations for which the method matches the optimal FP-FLA solution. The p90 and p95 values are the 90th and 95th percentiles of the percentage performance-gap distribution relative to FP-FLA, computed over successfully decoded permutations. Lower p90 and p95 values indicate smaller upper-tail losses.}
\label{tab:zero_p90_p95_random}
\begin{tabular}{lccccc}
\toprule
Instance & \makecell{FR-FLA/\\FRVCPY} & SS-FR-FLA & BHGA & BACO & CACO \\
\midrule
E-n22-k4 & \makecell[tc]{\textbf{83.4\%}\\[-1pt] \scriptsize p90: 0.02\%\\[-1pt] \scriptsize p95: 0.07\%} & \makecell[tc]{\textbf{50.8\%}\\[-1pt] \scriptsize p90: 0.17\%\\[-1pt] \scriptsize p95: 0.25\%} & \makecell[tc]{\textbf{17.9\%}\\[-1pt] \scriptsize p90: 1.05\%\\[-1pt] \scriptsize p95: 1.42\%} & \makecell[tc]{\textbf{12.0\%}\\[-1pt] \scriptsize p90: 1.07\%\\[-1pt] \scriptsize p95: 1.45\%} & \makecell[tc]{---} \\
E-n23-k3 & \makecell[tc]{\textbf{89.7\%}\\[-1pt] \scriptsize p90: 0.00\%\\[-1pt] \scriptsize p95: 0.03\%} & \makecell[tc]{\textbf{65.0\%}\\[-1pt] \scriptsize p90: 0.10\%\\[-1pt] \scriptsize p95: 0.16\%} & \makecell[tc]{\textbf{17.0\%}\\[-1pt] \scriptsize p90: 0.74\%\\[-1pt] \scriptsize p95: 0.98\%} & \makecell[tc]{\textbf{17.6\%}\\[-1pt] \scriptsize p90: 0.56\%\\[-1pt] \scriptsize p95: 0.78\%} & \makecell[tc]{\textbf{0.0\%}\\[-1pt] \scriptsize p90: 0.67\%\\[-1pt] \scriptsize p95: 0.77\%} \\
E-n30-k3 & \makecell[tc]{\textbf{61.9\%}\\[-1pt] \scriptsize p90: 0.03\%\\[-1pt] \scriptsize p95: 0.06\%} & \makecell[tc]{\textbf{42.9\%}\\[-1pt] \scriptsize p90: 0.12\%\\[-1pt] \scriptsize p95: 0.18\%} & \makecell[tc]{\textbf{0.2\%}\\[-1pt] \scriptsize p90: 1.36\%\\[-1pt] \scriptsize p95: 1.76\%} & \makecell[tc]{\textbf{0.1\%}\\[-1pt] \scriptsize p90: 1.40\%\\[-1pt] \scriptsize p95: 1.77\%} & \makecell[tc]{\textbf{0.0\%}\\[-1pt] \scriptsize p90: 1.68\%\\[-1pt] \scriptsize p95: 1.81\%} \\
E-n33-k4 & \makecell[tc]{\textbf{91.3\%}\\[-1pt] \scriptsize p90: 0.00\%\\[-1pt] \scriptsize p95: 0.08\%} & \makecell[tc]{\textbf{87.0\%}\\[-1pt] \scriptsize p90: 0.01\%\\[-1pt] \scriptsize p95: 0.09\%} & \makecell[tc]{\textbf{59.2\%}\\[-1pt] \scriptsize p90: 0.21\%\\[-1pt] \scriptsize p95: 0.33\%} & \makecell[tc]{\textbf{47.8\%}\\[-1pt] \scriptsize p90: 0.30\%\\[-1pt] \scriptsize p95: 0.37\%} & \makecell[tc]{---} \\
E-n51-k5 & \makecell[tc]{\textbf{47.9\%}\\[-1pt] \scriptsize p90: 0.07\%\\[-1pt] \scriptsize p95: 0.11\%} & \makecell[tc]{\textbf{25.1\%}\\[-1pt] \scriptsize p90: 0.12\%\\[-1pt] \scriptsize p95: 0.17\%} & \makecell[tc]{\textbf{2.1\%}\\[-1pt] \scriptsize p90: 0.73\%\\[-1pt] \scriptsize p95: 0.89\%} & \makecell[tc]{\textbf{1.5\%}\\[-1pt] \scriptsize p90: 0.61\%\\[-1pt] \scriptsize p95: 0.71\%} & \makecell[tc]{---} \\
E-n76-k7 & \makecell[tc]{\textbf{12.5\%}\\[-1pt] \scriptsize p90: 0.10\%\\[-1pt] \scriptsize p95: 0.14\%} & \makecell[tc]{\textbf{3.2\%}\\[-1pt] \scriptsize p90: 0.19\%\\[-1pt] \scriptsize p95: 0.23\%} & \makecell[tc]{\textbf{0.0\%}\\[-1pt] \scriptsize p90: 1.10\%\\[-1pt] \scriptsize p95: 1.13\%} & \makecell[tc]{\textbf{0.0\%}\\[-1pt] \scriptsize p90: 1.08\%\\[-1pt] \scriptsize p95: 1.29\%} & \makecell[tc]{---} \\
E-n101-k8 & \makecell[tc]{\textbf{13.1\%}\\[-1pt] \scriptsize p90: 0.08\%\\[-1pt] \scriptsize p95: 0.11\%} & \makecell[tc]{\textbf{3.2\%}\\[-1pt] \scriptsize p90: 0.13\%\\[-1pt] \scriptsize p95: 0.17\%} & \makecell[tc]{\textbf{0.0\%}\\[-1pt] \scriptsize p90: 1.16\%\\[-1pt] \scriptsize p95: 1.19\%} & \makecell[tc]{\textbf{0.0\%}\\[-1pt] \scriptsize p90: 0.94\%\\[-1pt] \scriptsize p95: 1.07\%} & \makecell[tc]{---} \\
X-n143-k7 & \makecell[tc]{\textbf{65.2\%}\\[-1pt] \scriptsize p90: 0.05\%\\[-1pt] \scriptsize p95: 0.07\%} & \makecell[tc]{\textbf{44.9\%}\\[-1pt] \scriptsize p90: 0.06\%\\[-1pt] \scriptsize p95: 0.08\%} & \makecell[tc]{\textbf{0.2\%}\\[-1pt] \scriptsize p90: 0.82\%\\[-1pt] \scriptsize p95: 0.94\%} & \makecell[tc]{\textbf{2.4\%}\\[-1pt] \scriptsize p90: 0.23\%\\[-1pt] \scriptsize p95: 0.28\%} & \makecell[tc]{---} \\
X-n214-k11 & \makecell[tc]{\textbf{35.4\%}\\[-1pt] \scriptsize p90: 0.07\%\\[-1pt] \scriptsize p95: 0.09\%} & \makecell[tc]{\textbf{23.1\%}\\[-1pt] \scriptsize p90: 0.07\%\\[-1pt] \scriptsize p95: 0.10\%} & \makecell[tc]{\textbf{0.0\%}\\[-1pt] \scriptsize p90: 0.74\%\\[-1pt] \scriptsize p95: 0.84\%} & \makecell[tc]{\textbf{0.3\%}\\[-1pt] \scriptsize p90: 0.30\%\\[-1pt] \scriptsize p95: 0.36\%} & \makecell[tc]{---} \\
X-n351-k40 & \makecell[tc]{\textbf{2.4\%}\\[-1pt] \scriptsize p90: 0.03\%\\[-1pt] \scriptsize p95: 0.04\%} & \makecell[tc]{\textbf{0.0\%}\\[-1pt] \scriptsize p90: 0.25\%\\[-1pt] \scriptsize p95: 0.27\%} & \makecell[tc]{---} & \makecell[tc]{\textbf{0.0\%}\\[-1pt] \scriptsize p90: 0.79\%\\[-1pt] \scriptsize p95: 0.81\%} & \makecell[tc]{---} \\
X-n459-k26 & \makecell[tc]{\textbf{0.4\%}\\[-1pt] \scriptsize p90: 0.02\%\\[-1pt] \scriptsize p95: 0.03\%} & \makecell[tc]{\textbf{0.0\%}\\[-1pt] \scriptsize p90: 0.07\%\\[-1pt] \scriptsize p95: 0.08\%} & \makecell[tc]{---} & \makecell[tc]{\textbf{0.0\%}\\[-1pt] \scriptsize p90: 0.49\%\\[-1pt] \scriptsize p95: 0.53\%} & \makecell[tc]{---} \\
X-n573-k30 & \makecell[tc]{\textbf{2.7\%}\\[-1pt] \scriptsize p90: 0.15\%\\[-1pt] \scriptsize p95: 0.18\%} & \makecell[tc]{\textbf{2.5\%}\\[-1pt] \scriptsize p90: 0.15\%\\[-1pt] \scriptsize p95: 0.18\%} & \makecell[tc]{\textbf{0.0\%}\\[-1pt] \scriptsize p90: 0.43\%\\[-1pt] \scriptsize p95: 0.50\%} & \makecell[tc]{\textbf{0.0\%}\\[-1pt] \scriptsize p90: 0.37\%\\[-1pt] \scriptsize p95: 0.42\%} & \makecell[tc]{---} \\
X-n685-k75 & \makecell[tc]{\textbf{0.0\%}\\[-1pt] \scriptsize p90: 0.02\%\\[-1pt] \scriptsize p95: 0.02\%} & \makecell[tc]{\textbf{0.0\%}\\[-1pt] \scriptsize p90: 0.15\%\\[-1pt] \scriptsize p95: 0.16\%} & \makecell[tc]{---} & \makecell[tc]{\textbf{0.0\%}\\[-1pt] \scriptsize p90: 0.69\%\\[-1pt] \scriptsize p95: 0.70\%} & \makecell[tc]{---} \\
X-n749-k98 & \makecell[tc]{\textbf{1.0\%}\\[-1pt] \scriptsize p90: 0.02\%\\[-1pt] \scriptsize p95: 0.02\%} & \makecell[tc]{\textbf{0.0\%}\\[-1pt] \scriptsize p90: 0.08\%\\[-1pt] \scriptsize p95: 0.09\%} & \makecell[tc]{---} & \makecell[tc]{\textbf{0.0\%}\\[-1pt] \scriptsize p90: 0.50\%\\[-1pt] \scriptsize p95: 0.54\%} & \makecell[tc]{---} \\
X-n819-k171 & \makecell[tc]{\textbf{0.1\%}\\[-1pt] \scriptsize p90: 0.01\%\\[-1pt] \scriptsize p95: 0.01\%} & \makecell[tc]{\textbf{0.0\%}\\[-1pt] \scriptsize p90: 0.09\%\\[-1pt] \scriptsize p95: 0.10\%} & \makecell[tc]{---} & \makecell[tc]{\textbf{0.0\%}\\[-1pt] \scriptsize p90: 0.48\%\\[-1pt] \scriptsize p95: 0.51\%} & \makecell[tc]{---} \\
X-n916-k207 & \makecell[tc]{\textbf{4.6\%}\\[-1pt] \scriptsize p90: 0.01\%\\[-1pt] \scriptsize p95: 0.02\%} & \makecell[tc]{\textbf{0.0\%}\\[-1pt] \scriptsize p90: 0.02\%\\[-1pt] \scriptsize p95: 0.03\%} & \makecell[tc]{\textbf{0.0\%}\\[-1pt] \scriptsize p90: 0.11\%\\[-1pt] \scriptsize p95: 0.12\%} & \makecell[tc]{\textbf{0.0\%}\\[-1pt] \scriptsize p90: 0.18\%\\[-1pt] \scriptsize p95: 0.20\%} & \makecell[tc]{---} \\
X-n1001-k43 & \makecell[tc]{\textbf{14.1\%}\\[-1pt] \scriptsize p90: 0.01\%\\[-1pt] \scriptsize p95: 0.01\%} & \makecell[tc]{\textbf{0.0\%}\\[-1pt] \scriptsize p90: 0.02\%\\[-1pt] \scriptsize p95: 0.02\%} & \makecell[tc]{---} & \makecell[tc]{\textbf{0.0\%}\\[-1pt] \scriptsize p90: 0.17\%\\[-1pt] \scriptsize p95: 0.18\%} & \makecell[tc]{---} \\
\bottomrule
\end{tabular}
\end{table}

\begin{figure*}[htbp]
    \centering
    \includegraphics[width=\textwidth]{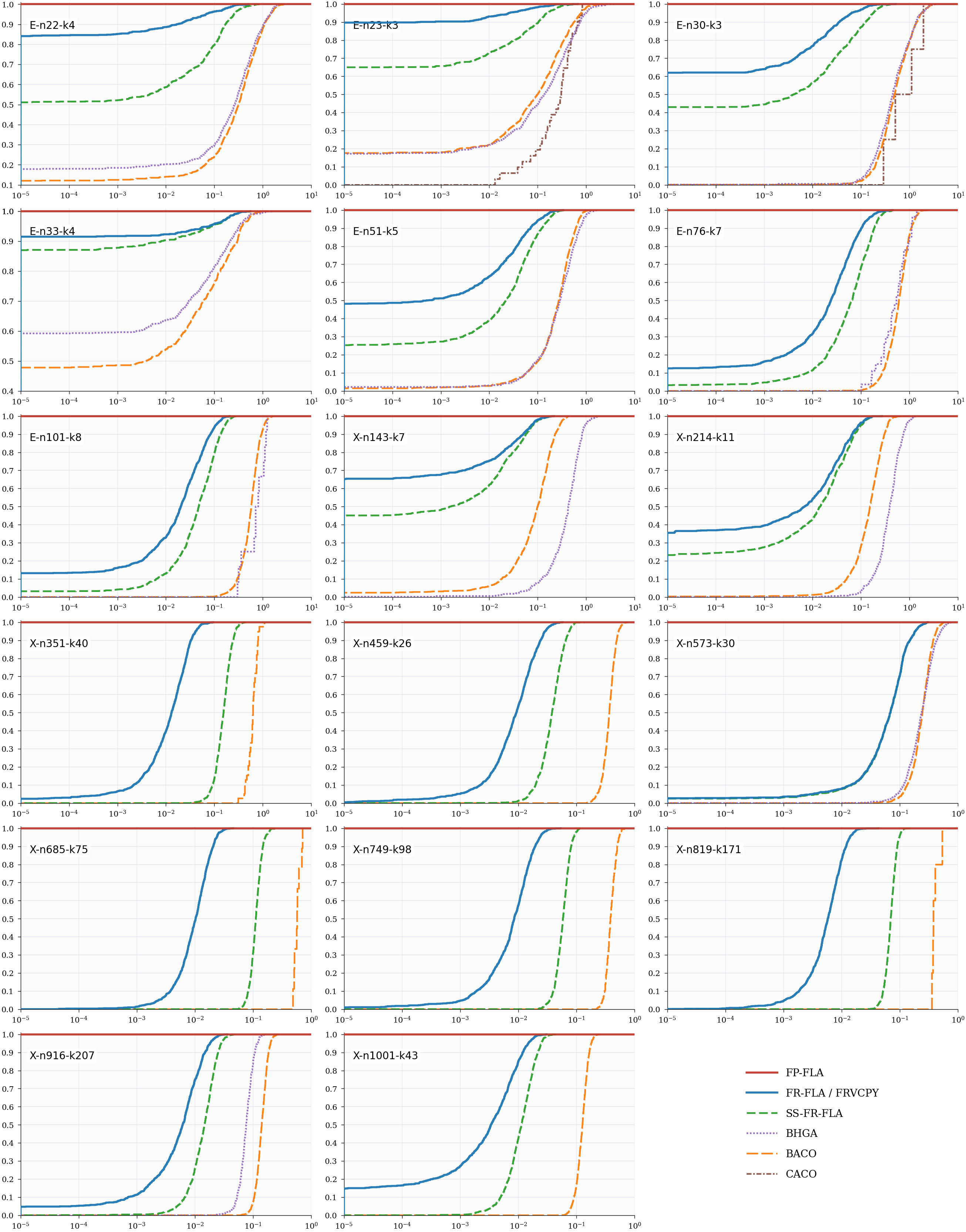}
    \caption{Empirical cumulative distribution functions of percentage performance gaps for all methods and WCCI-2020 benchmark instances using uniformly random permutations. Gaps are computed relative to the optimal FP-FLA decoding and only over permutations successfully decoded by each method. For a given x-axis value, the curve gives the proportion of decoded permutations with a performance gap no larger than that value. Curves that rise earlier and lie further left indicate better solution quality. The x-axis is logarithmic to show both small and large gaps.}
    \label{fig:rand-ecdf}
\end{figure*}

\begin{figure*}[htbp]
    \centering
    \includegraphics[width=\textwidth]{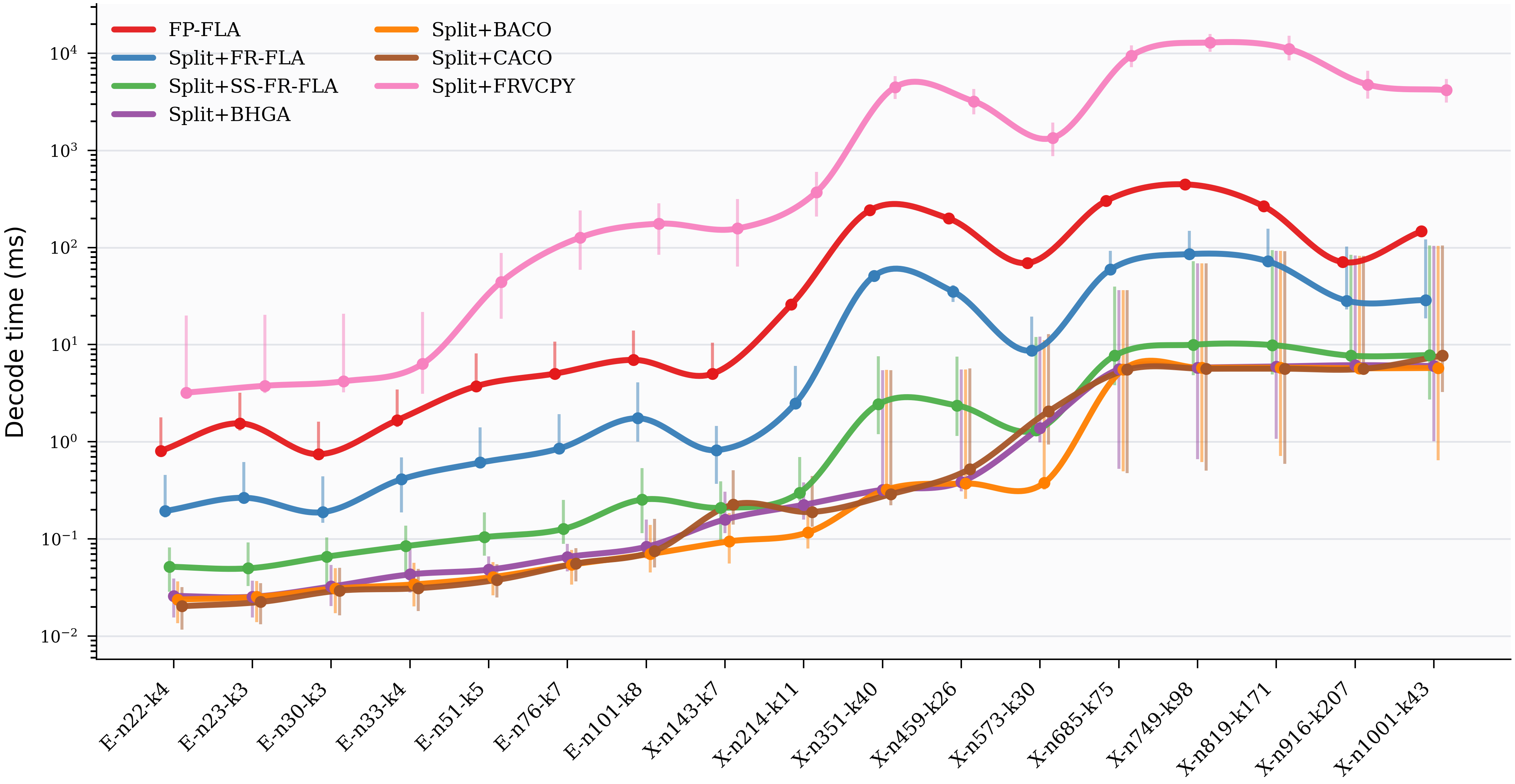}
    \caption{Median permutation decoding time on the WCCI-2020 benchmark instances using uniformly random permutations. Instances are ordered by the number of customers. Points show the median decoding time, and vertical bars span the first to third quartiles of the measured decoding times. The y-axis is logarithmic.}
    \label{fig:rand-time}
\end{figure*}

\FloatBarrier
The uniformly random permutation results should be interpreted primarily as a decoder stress test. Unlike the stochastic kNN permutations used in the main text, uniformly random permutations often contain long jumps between consecutive customers. This makes charging feasibility harder to restore and amplifies the effect of decoding simplifications. Consequently, the performance losses caused by simplifications are more pronounced than in the main experiments.

The solved-permutation results are particularly important in this setting. FP-FLA remains the reference method by construction, while FR-FLA successfully decodes all sampled permutations. The single-station variant also remains highly robust overall, although the restriction to at most one charging station between consecutive customers prevents it from solving some feasible permutations on the more difficult instances. In contrast, the heuristic decoders often have very low success rate, and in several cases fail to decode any of the feasible permutations. For such method--instance pairs, the corresponding performance-gap and runtime statistics should be interpreted with caution, since they are computed only over the small subset of permutations that the method successfully decoded.

The runtime results are affected by the same issue. When a heuristic decoder solves only a small fraction of permutations, its reported runtime reflects only those successfully solved cases, which are likely to be easier than the full set of feasible permutations. Therefore, the apparently larger runtime gap between the proposed labeling methods and the heuristic decoders should not be interpreted independently of the solved-permutation percentages. 

Overall, the uniformly random experiments reinforce the conclusions from the main text: stronger decoding methods are more robust and produce better solutions, while simpler and heuristic decoders reduce computational effort at the cost of feasibility and solution quality, especially when the input permutation is poorly structured.

%% file: references.bib
@article{jonidi2021urban,
  title={Urban air pollution control policies and strategies: a systematic review},
  author={Jonidi Jafari, Ahmad and Charkhloo, Esmail and Pasalari, Hasan},
  journal={Journal of Environmental Health Science and Engineering},
  volume={19},
  number={2},
  pages={1911--1940},
  year={2021},
  publisher={Springer}
}

@article{bhatti2020commerce,
  title={E-commerce trends during COVID-19 Pandemic},
  author={Bhatti, Anam and Akram, Hamza and Basit, Hafiz Muhammad and Khan, Ahmed Usman and Raza, Syeda Mahwish and Naqvi, Muhammad Bilal and others},
  journal={International Journal of Future Generation Communication and Networking},
  volume={13},
  number={2},
  pages={1449--1452},
  year={2020}
}

@article{juan2016electric,
  title={Electric vehicles in logistics and transportation: A survey on emerging environmental, strategic, and operational challenges},
  author={Juan, Angel Alejandro and Mendez, Carlos Alberto and Faulin, Javier and De Armas, Jesica and Grasman, Scott Erwin},
  journal={Energies},
  volume={9},
  number={2},
  pages={86},
  year={2016},
  publisher={MDPI}
}

@article{oliveira2017sustainable,
  title={Sustainable vehicles-based alternatives in last mile distribution of urban freight transport: A systematic literature review},
  author={Oliveira, Cintia Machado de and Albergaria De Mello Bandeira, Renata and Vasconcelos Goes, George and Schmitz Gon{\c{c}}alves, Daniel Neves and D’Agosto, M{\'a}rcio De Almeida},
  journal={Sustainability},
  volume={9},
  number={8},
  pages={1324},
  year={2017},
  publisher={MDPI}
}

@article{braekers2016vehicle,
  title={The vehicle routing problem: State of the art classification and review},
  author={Braekers, Kris and Ramaekers, Katrien and Van Nieuwenhuyse, Inneke},
  journal={Computers \& industrial engineering},
  volume={99},
  pages={300--313},
  year={2016},
  publisher={Elsevier}
}

@article{eksioglu2009vehicle,
  title={The vehicle routing problem: A taxonomic review},
  author={Eksioglu, Burak and Vural, Arif Volkan and Reisman, Arnold},
  journal={Computers \& Industrial Engineering},
  volume={57},
  number={4},
  pages={1472--1483},
  year={2009},
  publisher={Elsevier}
}

@article{kucukoglu2021electric,
  title={The electric vehicle routing problem and its variations: A literature review},
  author={Kucukoglu, Ilker and Dewil, Reginald and Cattrysse, Dirk},
  journal={Computers \& Industrial Engineering},
  volume={161},
  pages={107650},
  year={2021},
  publisher={Elsevier}
}

@article{erdelic2019survey,
  title={A survey on the electric vehicle routing problem: variants and solution approaches},
  author={Erdeli{\'c}, Tomislav and Cari{\'c}, Ton{\v{c}}i},
  journal={Journal of Advanced Transportation},
  volume={2019},
  number={1},
  pages={5075671},
  year={2019},
  publisher={Wiley Online Library}
}

@inproceedings{feng_bilevel_2024,
	title = {A Bilevel Hybrid Genetic Algorithm for Capacitated Electric Vehicle Routing Problem},
	url = {https://ieeexplore.ieee.org/abstract/document/10611987},
	doi = {10.1109/CEC60901.2024.10611987},
	eventtitle = {2024 {IEEE} Congress on Evolutionary Computation ({CEC})},
	pages = {1--8},
	booktitle = {2024 {IEEE} Congress on Evolutionary Computation ({CEC})},
	author = {Feng, Chang-Tao and Jia, Ya-Hui and Yang, Qiang and Chen, Wei-Neng and Jiang, Huaiguang},
	date = {2024-06},
    year={2024}
}

@article{jia2021bilevel,
  title={A bilevel ant colony optimization algorithm for capacitated electric vehicle routing problem},
  author={Jia, Ya-Hui and Mei, Yi and Zhang, Mengjie},
  journal={IEEE transactions on cybernetics},
  volume={52},
  number={10},
  pages={10855--10868},
  year={2021},
  publisher={IEEE}
}

@article{jia2022confidence,
  title={Confidence-based ant colony optimization for capacitated electric vehicle routing problem with comparison of different encoding schemes},
  author={Jia, Ya-Hui and Mei, Yi and Zhang, Mengjie},
  journal={IEEE Transactions on Evolutionary Computation},
  volume={26},
  number={6},
  pages={1394--1408},
  year={2022},
  publisher={IEEE}
}

@phdthesis{montoya2016electric,
  title={Electric Vehicle Routing Problems: models and solution approaches},
  author={Montoya, Alejandro},
  year={2016},
  school={Universit{\'e} d'Angers}
}

@article{pavai2016survey,
  title={A survey on crossover operators},
  author={Pavai, G and Geetha, TV},
  journal={ACM Computing Surveys (CSUR)},
  volume={49},
  number={4},
  pages={1--43},
  year={2016},
  publisher={ACM New York, NY, USA}
}

@article{larranaga1999genetic,
  title={Genetic algorithms for the travelling salesman problem: A review of representations and operators},
  author={Larranaga, Pedro and Kuijpers, Cindy M. H. and Murga, Roberto H. and Inza, Inaki and Dizdarevic, Sejla},
  journal={Artificial intelligence review},
  volume={13},
  number={2},
  pages={129--170},
  year={1999},
  publisher={Springer}
}

@inproceedings{mavrovouniotis2020benchmark,
  title={A benchmark test suite for the electric capacitated vehicle routing problem},
  author={Mavrovouniotis, Michalis and Menelaou, Charalambos and Timotheou, Stelios and Ellinas, Georgios and Panayiotou, Christos and Polycarpou, Marios},
  booktitle={2020 IEEE Congress on evolutionary computation (CEC)},
  pages={1--8},
  year={2020},
  organization={IEEE}
}

@article{vidal2022hybrid,
  title={Hybrid genetic search for the CVRP: Open-source implementation and SWAP* neighborhood},
  author={Vidal, Thibaut},
  journal={Computers \& Operations Research},
  volume={140},
  pages={105643},
  year={2022},
  publisher={Elsevier}
}

@INPROCEEDINGS{9504893,
  author={Chitty, Darren M.},
  booktitle={2021 IEEE Congress on Evolutionary Computation (CEC)}, 
  title={An Ant Colony Optimisation Inspired Crossover Operator for Permutation Type Problems}, 
  year={2021},
  volume={},
  number={},
  pages={57-64},
  doi={10.1109/CEC45853.2021.9504893}
}

@article{tian1999application,
  title={Application of the simulated annealing algorithm to the combinatorial optimisation problem with permutation property: An investigation of generation mechanism},
  author={Tian, Peng and Ma, Jian and Zhang, Dong-Mo},
  journal={European Journal of Operational Research},
  volume={118},
  number={1},
  pages={81--94},
  year={1999},
  publisher={Elsevier}
}

@inproceedings{deschenes2020fixed,
  title={The fixed route electric vehicle charging problem with nonlinear energy management and variable vehicle speed},
  author={Desch{\^e}nes, Anthony and Gaudreault, Jonathan and Vignault, Louis-Philippe and Bernard, Fr{\'e}d{\'e}ric and Quimper, Claude-Guy},
  booktitle={2020 IEEE International Conference on Systems, Man, and Cybernetics (SMC)},
  pages={1451--1458},
  year={2020},
  organization={IEEE}
}

@article{kullman2021frvcpy,
  title={frvcpy: An open-source solver for the fixed route vehicle charging problem},
  author={Kullman, Nicholas D and Froger, Aurelien and Mendoza, Jorge E and Goodson, Justin C},
  journal={INFORMS Journal on Computing},
  volume={33},
  number={4},
  pages={1277--1283},
  year={2021},
  publisher={INFORMS}
}

@article{froger2019improved,
  title={Improved formulations and algorithmic components for the electric vehicle routing problem with nonlinear charging functions},
  author={Froger, Aur{\'e}lien and Mendoza, Jorge E and Jabali, Ola and Laporte, Gilbert},
  journal={Computers \& Operations Research},
  volume={104},
  pages={256--294},
  year={2019},
  publisher={Elsevier}
}

@article{dantzig1959truck,
  title={The truck dispatching problem},
  author={Dantzig, George B and Ramser, John H},
  journal={Management science},
  volume={6},
  number={1},
  pages={80--91},
  year={1959},
  publisher={Informs}
}

@article{konstantakopoulos2022vehicle,
  title={Vehicle routing problem and related algorithms for logistics distribution: a literature review and classification: GD Konstantakopoulos et al.},
  author={Konstantakopoulos, Grigorios D and Gayialis, Sotiris P and Kechagias, Evripidis P},
  journal={Operational research},
  volume={22},
  number={3},
  pages={2033--2062},
  year={2022},
  publisher={Springer}
}

@article{lenstra1981complexity,
  title={Complexity of vehicle routing and scheduling problems},
  author={Lenstra, Jan Karel and Kan, AHG Rinnooy},
  journal={Networks},
  volume={11},
  number={2},
  pages={221--227},
  year={1981},
  publisher={Wiley Online Library}
}

@article{tahami2020exact,
  title={Exact approaches for routing capacitated electric vehicles},
  author={Tahami, Hesamoddin and Rabadi, Ghaith and Haouari, Mohamed},
  journal={Transportation Research Part E: Logistics and Transportation Review},
  volume={144},
  pages={102126},
  year={2020},
  publisher={Elsevier}
}

@article{hien2023greedy,
  title={A greedy search based evolutionary algorithm for electric vehicle routing problem},
  author={Hien, Vu Quoc and Dao, Tran Cong and Binh, Huynh Thi Thanh},
  journal={Applied Intelligence},
  volume={53},
  number={3},
  pages={2908--2922},
  year={2023},
  publisher={Springer}
}

@article{rodriguez2024new,
  title={A new hyper-heuristic based on adaptive simulated annealing and reinforcement learning for the capacitated electric vehicle routing problem},
  author={Rodr{\'\i}guez-Esparza, Erick and Masegosa, Antonio D and Oliva, Diego and Onieva, Enrique},
  journal={Expert Systems with Applications},
  volume={252},
  pages={124197},
  year={2024},
  publisher={Elsevier}
}

@inproceedings{woller2020grasp,
  title={The grasp metaheuristic for the electric vehicle routing problem},
  author={Woller, David and Koz{\'a}k, Viktor and Kulich, Miroslav},
  booktitle={International conference on modelling and simulation for autonomous systems},
  pages={189--205},
  year={2020},
  organization={Springer}
}

@article{vidal2016split,
  title={Split algorithm in O (n) for the capacitated vehicle routing problem},
  author={Vidal, Thibaut},
  journal={Computers \& Operations Research},
  volume={69},
  pages={40--47},
  year={2016},
  publisher={Elsevier}
}

@inproceedings{durasevic2024automated,
  title={Automated Design of Routing Policies for the Dynamic Electric Vehicle Routing Problem with Genetic Programming.},
  author={Durasevic, Marko and Gala, Francisco Javier Gil},
  booktitle={IJCCI},
  pages={346--353},
  year={2024}
}

@article{lin2021deep,
  title={Deep reinforcement learning for the electric vehicle routing problem with time windows},
  author={Lin, Bo and Ghaddar, Bissan and Nathwani, Jatin},
  journal={IEEE Transactions on Intelligent Transportation Systems},
  volume={23},
  number={8},
  pages={11528--11538},
  year={2021},
  publisher={IEEE}
}

@article{basso2022dynamic,
  title={Dynamic stochastic electric vehicle routing with safe reinforcement learning},
  author={Basso, Rafael and Kulcs{\'a}r, Bal{\'a}zs and Sanchez-Diaz, Ivan and Qu, Xiaobo},
  journal={Transportation research part E: logistics and transportation review},
  volume={157},
  pages={102496},
  year={2022},
  publisher={Elsevier}
}

@article{tang2023energy,
  title={Energy-optimal routing for electric vehicles using deep reinforcement learning with transformer},
  author={Tang, Mengcheng and Zhuang, Weichao and Li, Bingbing and Liu, Haoji and Song, Ziyou and Yin, Guodong},
  journal={Applied Energy},
  volume={350},
  pages={121711},
  year={2023},
  publisher={Elsevier}
}

@article{montoya2016multi,
  title={A multi-space sampling heuristic for the green vehicle routing problem},
  author={Montoya, Alejandro and Gu{\'e}ret, Christelle and Mendoza, Jorge E and Villegas, Juan G},
  journal={Transportation Research Part C: Emerging Technologies},
  volume={70},
  pages={113--128},
  year={2016},
  publisher={Elsevier}
}

@article{lozano2013exact,
  title={On an exact method for the constrained shortest path problem},
  author={Lozano, Leonardo and Medaglia, Andr{\'e}s L},
  journal={Computers \& operations research},
  volume={40},
  number={1},
  pages={378--384},
  year={2013},
  publisher={Elsevier}
}

@article{roberti2016electric,
  title={The electric traveling salesman problem with time windows},
  author={Roberti, Roberto and Wen, Min},
  journal={Transportation Research Part E: Logistics and Transportation Review},
  volume={89},
  pages={32--52},
  year={2016},
  publisher={Elsevier}
}

@article{hiermann2016electric,
  title={The electric fleet size and mix vehicle routing problem with time windows and recharging stations},
  author={Hiermann, Gerhard and Puchinger, Jakob and Ropke, Stefan and Hartl, Richard F},
  journal={European Journal of Operational Research},
  volume={252},
  number={3},
  pages={995--1018},
  year={2016},
  publisher={Elsevier}
}

@article{schiffer2018adaptive,
  title={An adaptive large neighborhood search for the location-routing problem with intra-route facilities},
  author={Schiffer, Maximilian and Walther, Grit},
  journal={Transportation Science},
  volume={52},
  number={2},
  pages={331--352},
  year={2018},
  publisher={INFORMS}
}

@article{floyd1962algorithm,
  title={Algorithm 97: shortest path},
  author={Floyd, Robert W},
  journal={Communications of the ACM},
  volume={5},
  number={6},
  pages={345--345},
  year={1962},
  publisher={ACM New York, NY, USA}
}
